\documentclass[journal]{IEEEtran}

\usepackage{ifpdf}
\usepackage{cite}
\ifCLASSINFOpdf
  \usepackage[pdftex]{graphicx}
\else
\fi
\usepackage{graphicx}
\usepackage{subfigure}
\usepackage{amsmath}
\usepackage{amsthm}
\usepackage{amssymb}
\usepackage{amsfonts}
\usepackage{amsmath,bm}
\usepackage{algorithmic}
\usepackage{algorithm}
\usepackage{array}
\usepackage{mathrsfs}
\usepackage{float}
\usepackage{placeins}
\usepackage{afterpage}
\usepackage{array} 
\usepackage{booktabs}
\usepackage{multirow, listings}
  \usepackage{verbatim}
\graphicspath{IEEEtran}
\newtheorem{theo}{Theorem}

\usepackage{subfigure,cleveref}
\begin{document}

\title{Waveform Design for 6G ISAC Systems Under Full-Duplex Residual Self-Interference}

\author{Ning Wei, Aimin Tang,~\IEEEmembership{Senior~Member,~IEEE,} Yin Xu,~\IEEEmembership{Senior~Member,~IEEE,} and Wenze Qu
\thanks{Ning Wei, Aimin Tang, and Yin Xu are with Shanghai Jiao Tong University, Shanghai, China. Wenze Qu is with the MediaTek Inc. Beijing, China. Corresponding author: Aimin Tang (email: tangaiming@sjtu.edu.cn).}
}

\maketitle

\begin{abstract}
In this paper, the waveform design for 6G integrated sensing and communication (ISAC) systems is investigated, with a particular focus on the practical limitations imposed by imperfect full-duplex radios. Under such imperfections, continuous communication waveforms, such as orthogonal frequency-division multiplexing (OFDM), suffer from severe full-duplex residual self-interference (RSI) for radar sensing, which significantly restricts the long-range sensing capabilities required by emerging low-altitude wireless networks (LAWN). To address this challenge, we propose a novel time-division ISAC waveform that integrates a specially developed dual-power phase-coded pulse for sensing into the communication frame under full-duplex RSI. Specifically, the dual-power sensing pulse consists of a high-power sequence followed by a low-power sequence, effectively exploiting imperfect full-duplex operations to achieve reliable long-range sensing while eliminating the detection blind range inherent to conventional half-duplex pulse radars. Furthermore, a complementary and inverse-phase sequence group is designed to ensure perfect autocorrelation and robust cross-correlation sidelobe suppression, so as to enhance multi-target detection capability. As for sensing signal processing, a parameterized mismatched filter is developed and optimized to maximize the detection performance, tailored to the proposed pulse structure. In addition, we design a hierarchical one-dimensional constant false-alarm-rate cell-average (CFAR-CA) detector that can exploit the perfect range-domain autocorrelation characteristics of the proposed waveform to further improve the detection performance. 
Extensive simulations demonstrate that the proposed design significantly improves the maximum detection range and multi-target detection capability compared to existing OFDM and linear frequency-modulated (LFM) pulse baselines, while effectively covering the blind range for targets with small radar cross section (RCS).
\end{abstract}

\begin{IEEEkeywords}
Integrated sensing and communications, waveform design, mismatched filter, complementary sequence, full-duplex radio
\end{IEEEkeywords}

%
\IEEEpeerreviewmaketitle

\section{Introduction}
Low-altitude wireless networks (LAWN) are the fundamental infrastructure to support the low-altitude economy \cite{yuan2025ground}. Connected aerial robots, including manned and unmanned aerial vehicles (UAVs), are key players in low-altitude intelligent transportation in the LAWN \cite{UAV,DroneRef,10403776}. To achieve robust and efficient management of aerial robots, both communication and radar sensing functions are indispensable. In recent years, integrated sensing and communication (ISAC), which enables radar sensing and wireless communication to share the same hardware architecture and signal processing blocks, has attracted significant research interest for 6G networks \cite{survey1,survey2,3GPP,survey3,Weisurvey,9509294}. Therefore, the ISAC design can naturally align communication and sensing requirements in LAWN on a unified platform. 

The realization of integrated sensing functionality in an ISAC base station (BS) necessitates simultaneous signal transmission and target echo reception, requiring full-duplex operation. However, insufficient isolation between transmitter and receiver introduces self-interference that influences the received sensing signals. Although multiple techniques, including antenna isolation, analog cancellation, and digital cancellation, can suppress self-interference, it is hard to achieve a perfect full-duplex radio in practical implementations \cite{SIC2,SIC3,tang2014balanced,SIlimited,SIlimited2}. Residual self-interference (RSI) elevates the system noise floor, potentially obscuring weak echoes from long-range targets and consequently degrading the long-range detection performance. However, long-range sensing capability represents a critical requirement for LAWN. Thus, under imperfect full-duplex radios, the ISAC waveform design for efficient sensing poses a key technical challenge.

The communication-centric waveform design is recognized as a promising approach for 6G ISAC networks. 
This strategy effectively incorporates sensing capabilities while maintaining compatibility with current communication systems, requiring only minimal modifications.
Extensive studies have focused on communication-centric waveform design for ISAC systems \cite{5GPRS,SSB,zhao2023reference,tang2024isi,zhou2024improving,liu2025cp, Constellation}. Among various studies, the orthogonal frequency-division multiplexing (OFDM) waveform has been established as the most widely adopted solution due to its compatibility with existing communication systems and adaptability to sensing requirements. 
Based on the OFDM waveform, there are two typical approaches to achieve wireless sensing: pilot-based design, such as \cite{5GPRS,SSB,zhao2023reference,tang2024isi,zhou2024improving}, and data payload-based design, such as \cite{liu2025cp,Constellation}.
Pilot symbols enable efficient sensing by providing deterministic reference signals (RSs), thereby simplifying implementation. Some studies exploit the existing RS, such as the positioning reference signal (PRS) \cite{5GPRS} and synchronization signal block (SSB) \cite{SSB}, for sensing. However, the sensing performance may not be guaranteed in this way. By considering both the sensing  and communication requirements, the RS is designed and optimized in \cite{zhao2023reference}. For long-range sensing, the OFDM waveform also suffers the critical problem of inter-symbol interference (ISI), due to the limited duration of the cyclic prefix (CP). To address this problem, an RS design with zero-power RS deployment is developed in \cite{tang2024isi}, which can highly improve the ISI-free sensing range. In \cite{zhou2024improving}, the RS design enabling alternating CP and cyclic postfix is designed to combat the ISI for long-range sensing. Besides pilot-based sensing design, data payload-based sensing provides a higher processing gain. However, the data payload-based design requires the elimination of communication data randomness during sensing processing, which necessitates real-time storage of stochastic communication data and consequently introduces additional system overhead. According to the analysis in \cite{liu2025cp}, the cyclic prefix-OFDM (CP-OFDM) waveform achieves superior ranging sidelobe suppression compared with other commonly used communication waveforms. 
In \cite{Constellation}, an optimal probabilistic constellation shaping approach is proposed to achieve an adjustable tradeoff between sensing and communication performance.  
Although various designs have been proposed, the impact of RSI is not addressed in the aforementioned studies. When the imperfect full-duplex radio is considered, the RSI can highly degrade the performance of the above designs, especially for long-range sensing. To address the impact of RSI, a pulse-like radar is incorporated into the OFDM waveform by a tailored pilot design and corresponding signal processing method in \cite{tang2021self} with the IEEE 802.11ad system. The proposed design is resistant to the RSI, so that effective long-range sensing can be achieved. However, such a design relies on the extremely large OFDM subcarrier spacing provided in the IEEE 802.11ad system, which cannot be directly applied to the 6G system. 

Unlike continuous waveforms that suffer from the influence of RSI, pulsed waveforms provide an effective solution to the self-interference problem due to half-duplex operation. However, half-duplex operation inherently leads to the partial eclipsing effect, which results in a detection blind range [\citen{radarbook}, Chap.~5].
Moreover, the pulse-based ISAC solution can lead to a low communication rate. To improve the communication data rate, a masked modulation scheme is developed in \cite{masked} based on half-duplex radios, which can highly improve the communication rate (up to 50\% duty cycle for communications) with pulsed radar systems. However, this design still suffers from the blind-range problem. In \cite{11373582}, the pulse waveform is inserted in the communication frame in a time-division manner, and the proposed design relies on a coordinated multi-BS ISAC framework to eliminate near-field blind zones, which, however, highly improves the system complexity. 
In \cite{waveform2022}, a time-division waveform is developed for ISAC based on full-duplex radios, where pulses are utilized for radar sensing, and communication transmission occupies the silent time between pulses. It can provide a higher communication rate, while the sensing performance is limited by the self-interference cancellation (SIC) capability.


In the aforementioned context, we propose a novel time-division waveform design for 6G ISAC systems with imperfect full-duplex radios. Specifically, we design a phase-coded pulse for sensing, while retaining conventional OFDM for communications. By leveraging the opportunities brought by the imperfect full-duplex radio, a special dual-power pulse structure with a high-power sequence component followed by a low-power sequence component is designed. The proposed waveform eliminates the blind range and achieves robust long-range sensing. To further enhance multi-target detection performance, we introduce a sequence group design that effectively suppresses correlation sidelobes. To fully exploit the potential of our designed waveform, a corresponding mismatched filter and a hierarchical one-dimensional (1D) detector are further designed for sensing signal processing. 
Comprehensive simulations validate that the proposed design significantly outperforms conventional linear frequency-modulated (LFM) and OFDM schemes. It can achieve a minimum detectable radar-cross-section (RCS) below $-40$ dBsm for short-range targets and maintains reliable long-range sensing independent of the system's SIC capability. With our proposed sequence design, it can successfully resolve closely spaced targets with a $20$ dB RCS disparity, mitigating the weak-target masking issue. Furthermore, the proposed hierarchical 1D detector achieves a substantially longer maximum detectable range than traditional two-dimensional (2D) methods.

The main contributions of this paper are summarized as follows:
\begin{itemize}
    \item Considering a practical imperfect full-duplex radio with limited self-interference cancellation, we propose a novel pulse sensing waveform composed of a high-power component followed by a low-power component, together with a mode-switching strategy between half-duplex and full-duplex reception. This design enables reliable long-range sensing while simultaneously eliminating the blind range inherent to traditional pulse radars.
    \item Based on the developed pulse structure, we construct the sequence group using complementary and inverse-phase sequence pairs, which can achieve perfect autocorrelation in the range domain, while effectively suppressing cross-correlation sidelobes between high- and low-power sequences. 
    \item A mismatched filter is developed to fully exploit the potential of the proposed waveform. It employs parallel processing with a delay-dependent weighting, allowing the low-power sequence to enhance sensing performance while maintaining low computational complexity. Within the constant false-alarm-rate cell-averaging (CFAR-CA) detection framework, we introduce comprehensive performance metrics, including the signal-to-sidelobe-plus-interference-plus-noise ratio (SSINR) for partial eclipsing scenarios. The weighting parameter is subsequently optimized based on the proposed metrics.
    \item A tailored hierarchical 1D CFAR-CA detection scheme is developed. Leveraging the near-elimination of range-domain sidelobes by our designed pulse sequences, the proposed detector achieves superior performance compared to the conventional 2D range-Doppler domain detection.
\end{itemize}

The rest of this paper is organized as follows. The system model is presented in Section \ref{section 2}. The sensing waveform and corresponding sequence design are elaborated in Section \ref{section 3}. The mismatched filtering design and target detection algorithm are presented in Section \ref{section 4}, with subsequent parameter optimization in Section \ref{section 5}. Simulation results are provided in Section \ref{section 6}. This paper is concluded in Section \ref{section 7}.


\begin{figure*}[t]
    \centering
    \includegraphics[width=0.98\textwidth]{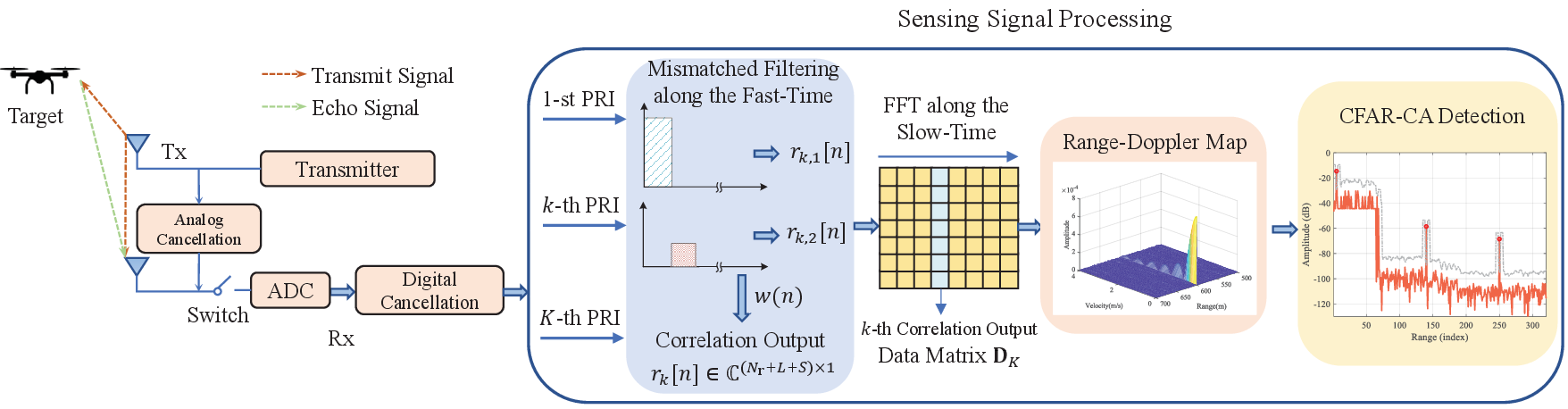 }
    \caption{System architecture and the sensing signal processing framework.}
    \label{system model}
    \vspace{-0em}
\end{figure*}

\textbf{Notations:}
$x$, $\mathbf{x}$, $\mathbf{X}$ represent a scalar, a vector, and a matrix, respectively. $\mathbf{x}^\mathsf{T}$ is the transpose of $\mathbf{x}$, and $\mathbf{x}^\mathsf{H}$ is the Hermitian transpose of $\mathbf{x}$. The notations $\mathbf{0}_{M\times N}$, $\mathbf{1}_{M\times N}$, and $\mathbf{I}_m$ denote the $M\times N$ all-zero matrix, $M\times N$ all-one matrix, and identity matrix of order $m$ respectively. The operator $\|\cdot\|_p$ represents the $\ell_p$-norm of its argument.
Operator $\mathrm{diag}(\mathbf{x})$ denotes the diagonal matrix formed by the elements of vector $\mathbf{x}$. 

\section{System Model}\label{section 2}



A monostatic ISAC system is considered, where the transmitter and receiver are co-located and integrated into a single transceiver, as shown in Fig. \ref{system model}. The ISAC BS is equipped with separate transmit and receive antennas, enabling simultaneous signal transmission and radar echo reception. However, an imperfect full-duplex radio is considered for the ISAC BS. More specifically, although antenna isolation, analog cancellation, and digital cancellation are all applied, 
the total SIC capability has an upper bound. Thus, when the transmission power is very high, strong RSI can still exist after cancellation. The RSI remains due to the finite SIC capability, which represents an inherent limitation in practical full-duplex systems. Moreover, the receiver's radio-frequency (RF) chain and analog-to-digital converter (ADC) will function correctly only if the transmission power is properly controlled at a low level; excessive power will saturate them. To support a high-power transmission for long-range sensing, a switch is added in the receiver RF chain, which can enable the imperfect full-duplex receiver to also support half-duplex mode. Therefore, the conventional high-power pulse transmission under half-duplex mode can be supported in our system model.  

Since this work concentrates on the waveform design and the associated signal processing algorithm, a single-antenna transceiver structure is employed to maintain clarity of presentation without loss of generality. The proposed design and conclusion can be readily extended to multi-antenna system implementations.

\section{Waveform Design}\label{section 3}


In this paper, a time-division ISAC waveform is designed, where our developed pulse waveform is periodically inserted into the traditional OFDM communication waveform. Thus, there is no modification to the existing communication waveform. In the following part of this paper, we will elaborate on the pulse waveform design for sensing and its corresponding signal processing design.

\subsection{Pulse Waveform Design for Sensing}
In conventional radar, the pulse waveform is a typical choice for long-range sensing via half-duplex radios. Due to its half-duplex operation, there exists a minimum range for the conventional pulse radar [\citen{radarbook2}, Chap.~3]
\begin{equation}
    R_{\min}=\frac{c_0(\tau+T_{\mathrm{r}})}{2},
    \label{blind}
\end{equation}
where $c_0$ represents the speed of light, $\tau$ denotes the pulse duration, and $T_{\mathrm{r}}$ is the recovery time for switching from transmission to reception. The minimum range in Eq. (\ref{blind}) is also referred to as the blind range in some textbooks such as [\citen{radarbook}, Chap.~5]. When a target lies within the radar’s minimum range, the receiver either cannot obtain the echo or receives only a portion of it, a condition known as partial eclipsing. Consequently, the accumulated echo energy for a target in the blind range is not sufficient for reliable target detection.

In this paper, we aim to design a novel pulse waveform that leverages the advantages of both pulse radar and full-duplex radio, while accounting for the practical limitations of imperfect full-duplex transceivers. The proposed pulse waveform is illustrated in Fig. \ref{signal model}, which utilizes a dual-power transmission scheme consisting of high-power and low-power components. The high-power component provides sufficient energy for long-range detection, while the low-power component eliminates the blind range to maintain continuous detection coverage.
The proposed sensing waveform consists of four consecutive phases, namely the high-power transmission period $T_{\mathrm{h}}$ with a transmission power of $P_{\mathrm{h}}$, the recovery period $T_{\mathrm{r}}$, the low-power transmission period $T_{\mathrm{l}}$ with a transmission power of $P_{\mathrm{l}}$, and the silent period $T_{\mathrm{s}}$. To preserve the interference-free advantage of the silent period in conventional radar, we have $T_{\mathrm{s}}\gg T_{\mathrm{h}}+T_{\mathrm{r}}+T_{\mathrm{l}}$. The total waveform duration is expressed as $T_{\mathrm{t}}=T_{\mathrm{h}}+T_{\mathrm{r}}+T_{\mathrm{l}}+T_{\mathrm{s}}$. 

At the receiver side, a switch is used to control the radio operating in half-duplex mode or full-duplex mode. When the high-power component is transmitted, the system works in half-duplex mode to avoid the saturation of the receiver RF chain. After $T_{\text{off}} = T_{\mathrm{h}} + T_{\mathrm{r}}$, the system is switched to full-duplex radio to receive the target echo with a duration of $T_{\text{on}} = T_{\mathrm{t}} - T_{\text{off}}$. 
For a target in the minimum range, due to the half-duplex operation for the high-power component, no samplings or partial samplings can be received; however, the low-power component of the target can be fully received due to full-duplex operation, although it suffers from RSI. Fortunately, the target in the minimum range is usually with a high echo power, which can be used to combat self-interference. For a target in the long range with a weak echo power, it is not affected by the RSI, and a full echo of the high-power component can be received for target detection. That is the basic principle of our proposed design for reliable long-range sensing, while eliminating the blind range in traditional pulse radar.     

\begin{figure}[t]
    \centering
    \includegraphics[width=0.75\linewidth]{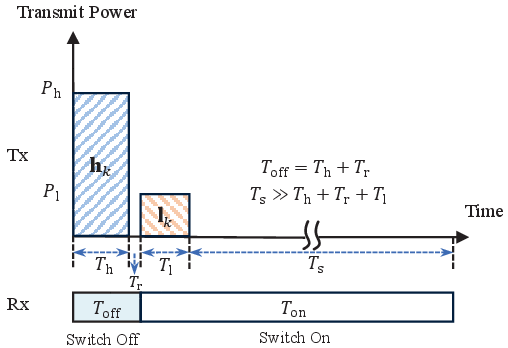 }
    \caption{The structure of our proposed sensing waveform.}
    \label{signal model}
    \vspace{-0em}
\end{figure}


The proposed pulse structure can be periodically inserted into the OFDM communication waveform in a time-division way, as illustrated in Fig. \ref{frame}. Assuming that the pulse repetition interval is $T$ and a coherent processing interval (CPI) contains $K$ pulses, the mathematical model for our proposed sensing pulse can be expressed as 
\begin{equation} \label{transmit signal}
    x(t)=\sum_{k=0}^{K-1}x_k(t-kT), \quad 0 \leq t \leq KT,
\end{equation}
where $x_k(t)$ represents the sensing waveform transmitted during the $k$-th pulse repetition interval (PRI), which can be further expressed as
\begin{equation} 
    x_k(t)=
    \begin{cases}
    \sqrt{P_\mathrm{h}}h_k(t),\hfill  0<t\leq T_\mathrm{h} \\
    0,\hfill T_\mathrm{h}<t\leq T_\mathrm{h}+T_\mathrm{r}\\
    \sqrt{P_\mathrm{l}}l_k(t-T_\mathrm{h}-T_\mathrm{r}),\hfill T_\mathrm{h}+T_\mathrm{r}<t\leq T_\mathrm{h}+T_\mathrm{r}+T_\mathrm{l} \\
    0,\hfill T_\mathrm{h}+T_\mathrm{r}+T_\mathrm{l}<t\leq T_\mathrm{t}
    \end{cases}
    \label{k-th transmit}
\end{equation}
where $h_k(t)$ and $l_k(t)$ denote the normalized waveforms for the high-power and low-power components, respectively, and they satisfy
\begin{equation} \label{energy1}
    \frac{1}{T_\mathrm{h}}\int_0^{T_\mathrm{h}}\lvert h_k(t)\lvert^2\mathrm{d}t=1,
\end{equation}
\begin{equation} \label{energy2}
    \frac{1}{T_\mathrm{l}}\int_{0}^{T_\mathrm{l}}\lvert l_k(t)\lvert^2\mathrm{d}t=1.
\end{equation}
The phased coded pulse waveform is considered in our design. Thus, $h_k(t)$ and $l_k(t)$ are generated by pulse shaping their respective discrete sequences. Specifically, $h_k(t)$ is formed from the length-$H$ sequence $\mathbf{h}_k=\big[h_{0,k},\ldots,h_{H-1,k}\big]^\mathsf{T}\in\mathbb{C}^{H\times1}$, whereas $l_k(t)$ is constructed from the length-$L$ sequence $\mathbf{l}_k=\big[l_{0,k},\ldots,l_{L-1,k}\big]^\mathsf{T}\in\mathbb{C}^{L\times1}$. The resulting continuous-time waveform constructed through pulse shaping can be expressed as
\begin{equation}
    h_{k}\left(t\right)=\sum_{i=0}^{H-1}h_{i,k}\varphi\left(t-iT_{\mathrm{p}}\right),\quad 0<t\leq T_{\mathrm{h}},
    \label{high signal}
\end{equation}
\begin{equation}
    l_{k}\left(t\right)=\sum_{i=0}^{L-1}l_{i,k}\varphi\left(t-iT_{\mathrm{p}}\right),\quad 0<t\leq T_{\mathrm{l}},
    \label{low signal}
\end{equation}
where $T_{\mathrm{p}}=\frac{1}{B}$ denotes the chip duration with a bandwidth of $B$, $T_{\mathrm{h}}=HT_{\mathrm{p}}$ and $T_{\mathrm{l}}=LT_{\mathrm{p}}$ represent the durations of the high-power and low-power components, respectively, and $\varphi(t)$ denotes the pulse-shaping filter.
According to Eq. (\ref{energy1}) and Eq. (\ref{energy2}), the sequences $\mathbf{h}_k$ and $\mathbf{l}_k$ are required to satisfy the following constraints:
\begin{equation}
    \left\|\mathbf{h}_k\right\|^2=H,\quad \left\|\mathbf{l}_k\right\|^2=L.
\end{equation}
To limit the peak-to-average power ratio (PAPR) performance, the proposed design employs constant-modulus sequences that satisfy $|h_{i,k}|=1$ and $|l_{i,k}|=1$. 

\begin{figure}[t]
    \centering
    \includegraphics[width=0.45\textwidth]{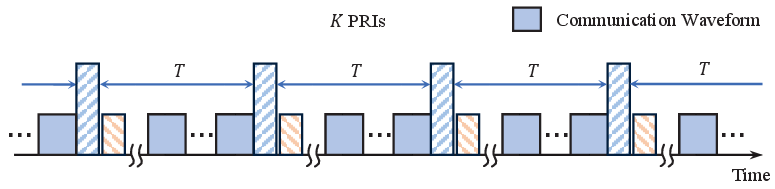 }
    \caption{Configuration of communication and sensing waveforms.}
    \vspace{-0em}
    \label{frame}
\end{figure}

\subsection{Sequence Design}
With our proposed phased-coded pulse structure, the sequences $\mathbf{h}_k$ and $\mathbf{l}_k$ can be designed to have excellent autocorrelation characteristics with low sidelobes, thereby enabling superior sensing performance for multi-target detection. Within the framework of the phase-coded pulse waveform, several established coding schemes exist, including biphase Barker codes for fixed low sidelobe levels, minimum peak sidelobe (MPS) codes, maximal length sequences with pseudorandom characteristics, Frank codes with good periodic correlation, etc. However, a single unimodular sequence is inherently unable to achieve perfect autocorrelation properties \cite{7420715}. Fortunately, the radar receiver usually jointly processes multiple pulses for target detection. In this case, complementary sequences \cite{1972complementary} have been employed in the design of phase-coded waveforms \cite{9133306,10681468}, which can perfectly eliminate the range sidelobes.
Therefore, we adopt complementary sequence pairs for the high-power component and the low-power component separately. Assuming that $\mathbf{a}_{H}$ and $\mathbf{b}_{H}$ are a complementary sequence pair for the high-power component and $\mathbf{a}_{L}$ and $\mathbf{b}_{L}$ are a complementary sequence pair for the low-power component, we can get two pulses with distinct sequences $(\mathbf{a}_{H},\mathbf{a}_{L})$ and $(\mathbf{b}_{H},\mathbf{b}_{L})$.

Although complementary sequences can achieve perfect autocorrelation, the proposed pulse structure introduces a critical challenge. Specifically, the use of a high-power component followed by a low-power component leads to non-negligible cross-correlation between these two parts. As shown in Fig.  \ref{Fig:cross-correlation}, the existence of correlation can highly degrade the detection performance for a target that falls into the cross-correlation region. 

\begin{figure}[t]
    \centering
    \subfigure[]
    {\includegraphics[width = 0.48\linewidth]{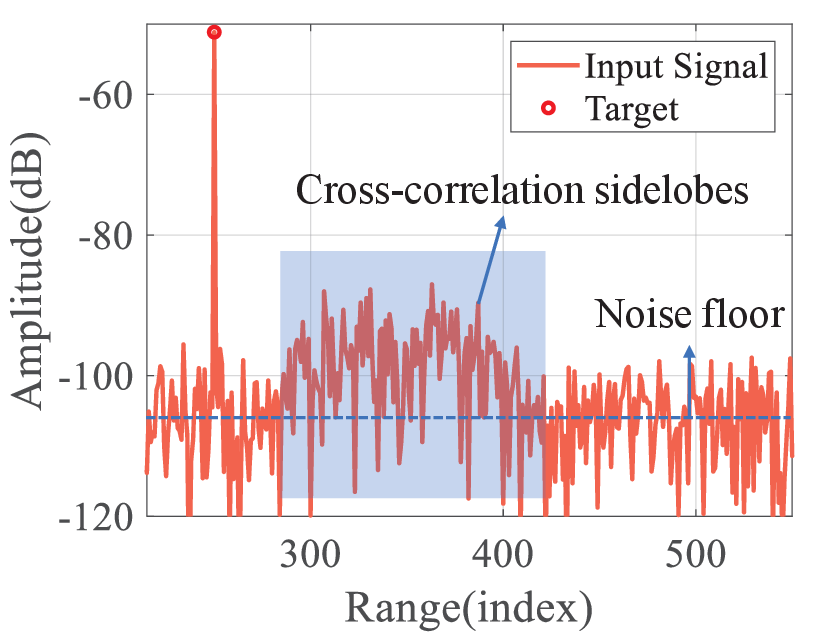}\label{Fig:cross-correlation}}
    \subfigure[]
    {\includegraphics[width=0.48\linewidth]{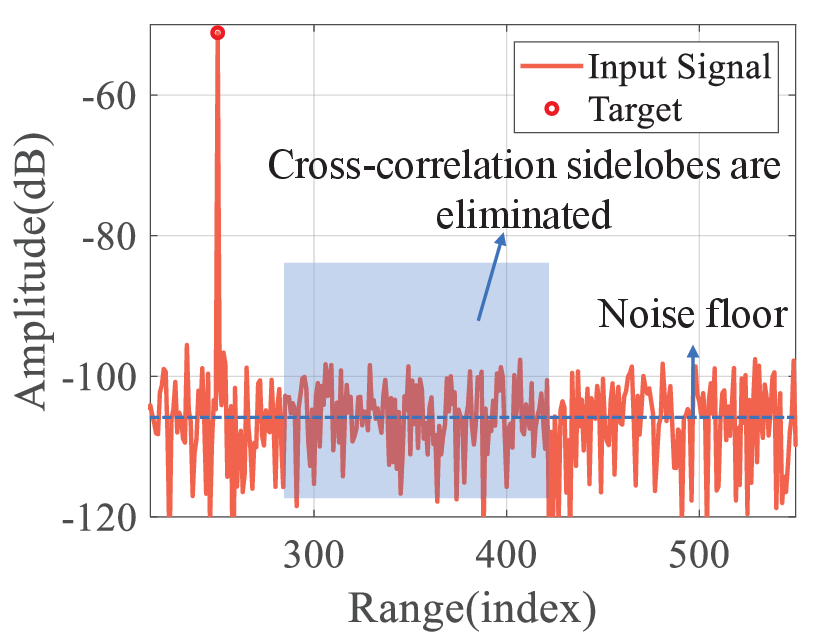}\label{Fig:cross-correlation2}}
    \caption{Cross-correlation sidelobes using (a) only complementary sequences and (b) the proposed set $\mathcal{C}$.
}
    \vspace{-0em}
\end{figure}

To address the cross-correlation problem, we introduce inverse-phase sequences and construct a sequence set $\mathcal{C}$ based on complementary sequences, defined as
\begin{equation}
    \mathcal{C} = \{(\mathbf{a}_{H},\mathbf{a}_{L}), (\mathbf{b}_{H},\mathbf{b}_{L}),(\mathbf{a}_{H},-\mathbf{a}_{L}),(\mathbf{b}_{H},-\mathbf{b}_{L})\},
\end{equation}
where the sequences $-\mathbf{a}_{L}$ and $-\mathbf{b}_{L}$ represent the negations of the sequences $\mathbf{a}_{L}$ and $\mathbf{b}_{L}$, respectively. 
For real-valued sequences composed of $\pm1$ elements, this negation operation is equivalent to scalar multiplication by $-1$. In the case of complex-valued sequences, the negation corresponds to a phase shift of $\pi$ radians applied to each complex sample in the sequence, mathematically expressed as $e^{j\pi}\mathbf{a}_L = -\mathbf{a}_L$ and $e^{j\pi}\mathbf{b}_L = -\mathbf{b}_L$. It should be noted that inverse-phase sequences can be applied either to the high-power component or the lower-power component, and the sequence set $\mathcal{C}$ adopts the latter case as an example. In addition, since the proposed sequence set $\mathcal{C}$ comprises four pulses, to preserve the designed correlation properties during joint processing, the number of PRIs $K$ within each CPI should satisfy $K \equiv 0 \pmod{4}$.

With the above design, we can have the following properties. Since $\mathbf{a}_{H}$ ($\mathbf{a}_{L}$) and $\mathbf{b}_{H}$ ($\mathbf{b}_{L}$) are a complementary sequence pair, we have the following autocorrelation result:  
\begin{equation}
     R_{\mathbf{a}_{H}}(\tau) + R_{\mathbf{b}_{H}}(\tau) = 2H\delta(\tau),
     \label{highsq}
\end{equation}
\begin{equation}
     R_{\mathbf{a}_{L}}(\tau) + R_{\mathbf{b}_{L}}(\tau) = 2L\delta(\tau),
\end{equation}
where $R_{\mathbf{x}}$ represent the aperiodic autocorrelation function of sequence $\mathbf{x}$ and $\delta(\tau)$ denotes the Dirac delta function.
With the introduction of the $(\mathbf{a}_{H},-\mathbf{a}_{L})$ and $(\mathbf{b}_{H},-\mathbf{b}_{L})$, the cross-correlation sidelobes between the high-power and low-power sequences can be effectively eliminated: 
\begin{equation}
    \begin{aligned}
        R_{\mathbf{a}_H,\mathbf{a}_L}(\tau)+R_{\mathbf{a}_{H},(-\mathbf{a}_{L})}(\tau)=0, \quad  \forall \tau,\\
        R_{\mathbf{b}_H,\mathbf{b}_L}(\tau)+R_{\mathbf{b}_{H},(-\mathbf{b}_{L})}(\tau)=0, \quad \forall \tau,
    \end{aligned}
    \label{negation}
\end{equation}
where $R_{\mathbf{x},\mathbf{y}}(\tau)$ denotes the aperiodic cross-correlation function (ACCF) between two sequences $\mathbf{x}$ and $\mathbf{y}$.
As shown in Eq. (\ref{negation}), the joint processing of a sequence group enables the mutual cancellation of cross-correlation sidelobes generated by high- and low-power sequences, which is also demonstrated in Fig. \ref{Fig:cross-correlation2}. The sidelobe cancellation mechanism relies on the phase opposition between cross-correlation sidelobes. However, the negation operation maintains the complementary characteristics, i.e.,
\begin{equation}
    R_{(-\mathbf{a}_{L})}(\tau) + R_{(-\mathbf{b}_{L})}(\tau) = 2L\delta(\tau).
\end{equation}
Accordingly, the proposed sequence set $\mathcal{C}$ guarantees perfect autocorrelation performance under joint processing while simultaneously achieving effective suppression of cross-correlation sidelobes.

\subsection{Received Waveform in Sensing Duration}
For each pulse, the receiver begins to sample after $T_{\text{off}}$. Assuming there are $I$ sensing targets, the received waveform can be expressed as
\begin{equation}
    \begin{aligned}
    y(t)=\sum_{i=1}^I{\alpha_i} x(t-\tau_i)e^{j2\pi f_{d,i}t}+\beta g(t)+z(t),\\
    kT+T_{\text{off}}\leq t\leq kT+T_{\text{t}}, k=0,\cdots,K-1
    \end{aligned}
    \label{echo}
\end{equation}
where $\beta g(t)$ characterizes the RSI after SIC, thereby reflecting the limited SIC capability of the full-duplex radio, and $\beta$ denotes the associated SIC capability; $z(t)$ is the noise; $\tau_i=\frac{2R_i}{c_0}$ represents the round-trip propagation delay associated with the $i$-th target located at range $R_i$; $f_{d,i}=\frac{2f_c v_i}{c_0}$ is the Doppler shift induced by the $i$-th target's radial velocity $v_i$ with $f_c$ denoting the carrier frequency; and ${\alpha_i}$ denotes the complex channel coefficient for target $i$. The magnitude of ${\alpha_i}$ usually adopts the classical two-way radar propagation model, expressed as $ |\alpha_i|^2=\frac{G_\mathrm{t}G_\mathrm{r}\lambda^2\sigma_i}{(4\pi)^3R_i^4}$, where $G_{\mathrm{t}}$ and $G_{\mathrm{r}}$ denote the transmit and receive antenna gains, $\lambda$ is the carrier wavelength, and $\sigma_i$ is the RCS of the target. Note that the maximum detectable delay satisfies $0<\tau\leq T_{\mathrm{t}}-T_{\mathrm{h}}$. Moreover, to prevent interference from the low-power component in the subsequent communication symbol interval, we assume that $T_{\mathrm{l}}\leq T_{\mathrm{h}}$, i.e., $L\leq H$.

Considering the received signal during the $k$-th PRI, the signal model can be re-expressed as
\begin{equation}
    y_k(t)=\sum_{i=1}^I\alpha_i x_k(t-\tau_i)e^{j2\pi f_{d,i}t}e^{j2\pi f_{d,i}kT}+\beta g_k(t)+z_k(t),
    \label{PRI}
\end{equation}
where $T_{\mathrm{off}}< t \leq T_{\mathrm{t}}$. Within each PRI of duration $T_{\mathrm{t}}$, the complex exponential term $e^{j2\pi f_{d,i} t}$ in Eq. (\ref{PRI}) can be treated as approximately constant because the Doppler-induced phase variation over a single PRI is negligible, i.e., 
$f_{d,i}t\ll 1$ and thus $e^{j2\pi f_{d,i} t}\approx 1$. 
This property implies that reliable Doppler estimation requires processing across the slow-time dimension over multiple PRIs.
Therefore, the sensing received signal during the $k$-th PRI in Eq. (\ref{PRI}) can be simplified as
\begin{equation}
    y_k(t)=\sum_{i=1}^I\alpha_i x_k(t-\tau_i)e^{j2\pi f_{d,i}kT}+\beta g_k(t)+z_k(t).
\end{equation}
The system noise $z_k(t)$ is usually modeled as zero-mean additive white Gaussian noise (AWGN) with circularly symmetric complex statistics, given by $z_k(t)\sim\mathcal{CN}(0, N_{0}B)$, where $N_{0}$ denotes the noise power spectral density (PSD) and $B$ denotes the system bandwidth. In addition, the RSI is also modeled as zero-mean AWGN, as used in \cite{RSImodel1, RSImodel2, RSImodel3}. Therefore, given a limited SIC cancellation capability $\beta$, the RSI $\beta g_k(t)$ can be further expressed as a linear function of the transmission power of the low-power component as $\beta \sqrt{P_{\mathrm{l}}}p(t)s_k(t)$, where  $s_k(t)\sim \mathcal{CN}(0,1)$ and $p(t)$ is the rectangular window function, which is defined as
\begin{equation}
    \begin{aligned}        p(t)&=\mathbb{I}_{[T_{\text{off}},T_{\text{off}}+T_{\mathrm{l}})}(t)\\
        &=\begin{cases}
            1, & \mathrm{for~}T_{\text{off}}\leq t<T_{\text{off}}+T_{\mathrm{l}} \\
            0, & \mathrm{otherwise}  
        \end{cases}
    \end{aligned}
\end{equation}
Thus, the received signal after SIC can be expressed as
\begin{equation}
    y_k(t)=\sum_{i=1}^I\alpha_i x_k(t-\tau_i)e^{j2\pi f_{d,i}kT}+\beta \sqrt{P_{\mathrm{l}}}p(t)s_k(t)+z_k(t).
    \label{receive signal}
\end{equation}


Since the multi-target scenario constitutes a straightforward extension of the single-target case, the following analysis is carried out for brevity, i.e., $I=1$. Moreover, we assume the target is located in the sampling bin $n_\tau$ to simplify the presentation of the discrete expression. If the target is not in the sampling bin, our proposed signal processing design can also be readily applied. 
The sampled signal can be expressed in vector form as
\begin{equation}
    \mathbf{y}_{k}=\alpha e^{j2\pi f_{d}kT}\mathbf{x}_{k,n_{\tau}}+\beta\mathbf{P}_{\text{SI}}\mathbf{s}_{k}+\mathbf{P}_{\text{z}}\mathbf{z}_{k},
    \label{Rxsignal}
\end{equation}
where $\mathbf{y}_k \in \mathbb{C}^{(H+N_{\mathrm{r}}+L+S)\times 1}$ denotes the received sampling vector, 
$S=T_{\mathrm{s}}/T_\mathrm{p}$ and $N_{\mathrm{r}}=T_{\mathrm{r}}/T_\mathrm{p}$ represent the numbers of samples collected during 
the silent and recovery intervals, respectively, $n_\tau=1,2,\ldots,N_{\mathrm{r}}+L+S$ follows from the delay constraint 
$0<\tau \leq T-T_{\mathrm{h}}$, and $\mathbf{x}_{k,n_\tau}$ represents the target echo within the 
$k$-th PRI. The matrix $\mathbf{P}_{\text{SI}}=\operatorname{diag}(\mathbf{p}_{\text{SI}})$ specifies the power distribution of self-interference, with $\mathbf{p}_{\text{SI}}=[\mathbf{0}_{1\times (H+N_{\mathrm{r}})}, \sqrt{P_\mathrm{l}}\mathbf{1}_{1\times L}, \mathbf{0}_{1\times S}]^\mathsf{T}$.
The matrix $\mathbf{P}_{\text{z}}=\operatorname{diag}(\mathbf{p}_{\text{z}})$ specifies the power distribution of noise, with $\mathbf{p}_{\text{z}}=[\mathbf{0}_{1\times (H+N_{\mathrm{r}})},  \mathbf{1}_{1\times (L+S)}]^\mathsf{T}$.
The terms $\mathbf{s}_{k}$ and $\mathbf{z}_{k}$ represent complex Gaussian random vectors following $\mathbf{s}_{k} \sim \mathcal{CN}(\mathbf{0},\mathbf{I}_{H+N_{\mathrm{r}}+L+S})$ and $\mathbf{z}_{k} \sim \mathcal{CN}(\mathbf{0},N_0B\mathbf{I}_{H+N_{\mathrm{r}}+L+S})$, respectively. 
Furthermore, the condition $T_{\mathrm{s}}\gg T_{\mathrm{h}}+T_{\mathrm{r}}+T_{\mathrm{l}}$ ensures that $S>H+N_{\mathrm{r}}+L$. Based on Eqs.~\eqref{k-th transmit}, \eqref{high signal}, and \eqref{low signal}, the received signal corresponding to a given delay index $n_\tau$ can therefore be written as
\begin{equation}\label{eq:receivedX}
    \begin{aligned}
        &\mathbf{x}_{k,n_{\tau}}=\\
        &\begin{cases}
            \left[\mathbf{0}_{1\times (H+N_{\mathrm{r}}+n_{\tau})},\sqrt{P_{\mathrm{l}}}l_{0,k},\ldots,\sqrt{P_{\mathrm{l}}}l_{L-1,k},\mathbf{0}_{1\times(S-n_{\tau})} \right]^{\mathsf{T}} \\
            \hfill 0< n_{\tau}\leq N_{\mathrm{r}},\\
            
            \left[\mathbf{0}_{1\times (H+N_{\mathrm{r}})},\sqrt{P_{\mathrm{h}}}h_{H-n_{\tau}+N_{\mathrm{r}},k},\ldots,\sqrt{P_{\mathrm{h}}}h_{H-1,k},\mathbf{0}_{1\times N_{\mathrm{r}}} \right.  \\
            \left. \sqrt{P_{\mathrm{l}}}l_{0,k},\ldots,\sqrt{P_{\mathrm{l}}}l_{L-1,k},\mathbf{0}_{1\times(S-n_{\tau})}\right]^{\mathsf{T}}  \hfill N_{\mathrm{r}}< n_{\tau}\leq H+N_{\mathrm{r}}, \\
            
            \left[\mathbf{0}_{1\times n_{\tau}},\sqrt{P_{\mathrm{h}}}h_{0,k},\ldots,\sqrt{P_{\mathrm{h}}}h_{H-1,k}, \mathbf{0}_{1\times N_{\mathrm{r}}}\sqrt{P_{\mathrm{l}}}l_{0,k},\ldots,\right.  \\
            \left. \sqrt{P_{\mathrm{l}}}l_{L-1,k},\mathbf{0}_{1\times(S-n_{\tau})}\right]^{\mathsf{T}}  \hfill H+N_{\mathrm{r}}< n_{\tau}\leq S, \\
            
            \left[\mathbf{0}_{1\times n_{\tau}},\sqrt{P_{\mathrm{h}}}h_{0,k},\ldots,\sqrt{P_{\mathrm{h}}}h_{H-1,k}, 
            \mathbf{0}_{1\times N_{\mathrm{r}}}, \sqrt{P_{\mathrm{l}}}l_{0,k},\ldots,\right.  \\
            \left. \sqrt{P_{\mathrm{l}}}l_{L+S-n_{\tau}-1,k},\right]^{\mathsf{T}}  
            \hfill S< n_{\tau}\leq L+S.\\

            \left[\mathbf{0}_{1\times n_{\tau}},\sqrt{P_{\mathrm{h}}}h_{0,k},\ldots,\sqrt{P_{\mathrm{h}}}h_{H-1,k}, 
            \mathbf{0}_{1\times (L+S+N_{\mathrm{r}}-n_{\tau})}\right]^{\mathsf{T}}  \\
            \hfill L+S< n_{\tau}\leq N_{\mathrm{r}}+L+S.
           
        \end{cases}
    \end{aligned}
\end{equation}

\section{Sensing Signal Processing} \label{section 4}
As shown in Fig. \ref{system model}, the sensing signal processing includes the fast-time processing with a mismatched filter design for delay estimation, the slow-time processing with the conventional fast Fourier transform (FFT) operation for Doppler estimation. After the range-Doppler (RD) map is obtained, the CFAR-CA detector is applied for target detection. The details for sensing signal processing are elaborated in this section.

\subsection{Mismatched Filter Design for Fast-Time Processing}
For pulse radar, the matched filter is usually adopted to maximize the output signal-to-noise ratio (SNR). However, our proposed pulse and the corresponding reception mode switching lead to a special received signal structure: 1) for the target in the minimum/short range (delay less than $T_\text{off}$), the high-power sequence suffers partial/full reception loss, while the low-power sequence is fully received under the impact of RSI; 
2) for the target in the middle range (delay larger than $T_\text{off}$ but less than $T_\text{off}+T_\text{l}$), both sequences can be fully received, but the received high-power sequence suffers from the RSI; 3) for the target in the long range (delay larger than $T_\text{off}+T_\text{l}$), the high-power sequence can be fully received without the impact of RSI. To exploit the feature that the low-power sequence can always be fully received with excellent autocorrelation properties, a mismatched filter is designed in this paper for the fast-time processing. More specifically, a tunable weight $w$ is applied in the receiving processing filtering as
\begin{equation}
    \begin{split}
        \mathbf{f}_k^{\prime}=
        &\left[\sqrt{P_\mathrm{h}}h_{0,k},\dots,\sqrt{P_\mathrm{h}}h_{H-1,k},\mathbf{0}_{1\times N_{\mathrm{r}}}\right. \\
        &\left. w\sqrt{P_\mathrm{l}}l_{0,k},\dots,w\sqrt{P_\mathrm{l}}l_{L-1,k},\mathbf{0}_{1\times S}\right]^\mathsf{T}.
    \end{split}
    \label{allUMF}
\end{equation}
Consequently, the normalized mismatched filter can be expressed as
\begin{equation}
    \mathbf{f}_k=\frac{\mathbf{f}_k^{\prime}}{\|\mathbf{f}_k^{\prime}\|}=\frac{\mathbf{f}_k^{\prime}}
    {\sqrt{P_\mathrm{h}H+w^2P_{\mathrm{l}}L}}.
\end{equation}

The above mismatched filter design can be explained as two matched filters for the high-power sequence and the low-power sequence separately, but a dynamic weight $w$ is designed to balance the contributions of these two parallel matched filters in the final output. This weighting mechanism can effectively suppress the sidelobe elevation caused by partial eclipsing effects during short-range target detection. This tunable weight can be fine-tuned for each potential delay of the target to achieve optimal performance, as presented in the next Section.  
Consequently, the mismatched filter $\mathbf{f}_k$ can be decomposed into
\begin{equation}
    \mathbf{f}_k=\frac{\mathbf{f}_{k,1}^{\prime}+w\mathbf{f}_{k,2}^{\prime}}
    {\sqrt{P_\mathrm{h}H+w^2P_{\mathrm{l}}L}},
    \label{UMF}
\end{equation}
where $\mathbf{f}_{k,1}^{\prime}=[\sqrt{P_\mathrm{h}}h_{0,k},\dots,\sqrt{P_\mathrm{h}}h_{H-1,k},\mathbf{0}_{1\times ( N_{\mathrm{r}}+L+S)}]^\mathsf{T}$ denotes the matched filter for the high-power component, and $\mathbf{f}_{k,2}^{\prime}=[\mathbf{0}_{1\times (H+ N_{\mathrm{r}})},\sqrt{P_\mathrm{l}}l_{0,k},\dots,\sqrt{P_\mathrm{l}}l_{L-1,k},\mathbf{0}_{1 \times S}]^\mathsf{T}$ represents the corresponding matched filter for the low-power component.

The correlation processing output for the fast-time processing in PRI $k$ is obtained by applying the designed mismatched filter $\mathbf{f}_k$ to the discretized received signal $\mathbf{y}_{k}$, which can be mathematically expressed as
\begin{equation}
    r_k[n]=\mathbf{f}_k^{\mathsf{H}}\mathbf{J}_n\mathbf{y}_{k},\quad n=1,\dots,L+S,
    \label{correlation}
\end{equation}
where $\mathbf{J}_n$ represents a shift operator that performs time-shifts on the received signal vector $\mathbf{y}_{k}$. Mathematically, this operator can be expressed as
\begin{equation}
    \mathbf{J}_{n}=\begin{bmatrix}{\mathbf{0}_{(M-n)\times n}}&{\mathbf{I}_{M-n}}\\{\mathbf{0}_{n\times n}}&{\mathbf{0}_{n\times(M-n)}}\end{bmatrix}
\end{equation}
where $M=H+N_{\mathrm{r}}+L+S$ indicates the total number of sampling points.
Based on Eq. (\ref{UMF}), the correlation processing procedure can be simplified to
\begin{equation}\label{eq:corrsum}
    r_k[n]=\frac{1}{\sqrt{P_\mathrm{h}H+w^2P_{\mathrm{l}}L}}\big(r_{k,1}[n]+wr_{k,2}[n]\big),
\end{equation}
where $r_{k,1}[n]=\mathbf{f}_{k,1}^{\prime\mathsf{H}}\mathbf{J}_n\mathbf{y}_{k}$ and $r_{k,2}[n]=\mathbf{f}_{k,2}^{\prime\mathsf{H}}\mathbf{J}_n\mathbf{y}_{k}$ represent the correlation processing outputs of the matched filters $\mathbf{f}_{k,1}^{\prime}$ and $\mathbf{f}_{k,2}^{\prime}$, respectively.

Based on Eq. (\ref{eq:corrsum}), we can see that for different values of $n$, the system can compute the weighted sum of $r_{k,1}[n]$ and $r_{k,2}[n]$ with varying weights $w(n)$, rather than performing multiple correlations using the filter defined in Eq. (\ref{allUMF}). This approach significantly reduces computational complexity while maintaining the desired processing flexibility. The simplified operation requires only a single pair of correlation outputs $\big(r_{k,1}[n],r_{k,2}[n]\big)$ followed by parameterized weighting, eliminating the need for repeated filter implementations across different range indices.

After the fast-time processing with the proposed mismatched filter, each PRI yields a correlation vector of length $N_{\mathrm{r}}+L+S$, corresponding to $N_{\mathrm{r}}+L+S$ discrete delay bins, which can be expressed as
\begin{equation}
    \mathbf{r}_k=\big[r_k[1],r_k[2],\dots,r_k[N_{\mathrm{r}}+L+S]\big]^{\mathsf{T}}\in \mathbb{C}^{(N_{\mathrm{r}}+L+S)\times1}.
\end{equation}
For a CPI comprising $K$ PRIs, this results in a $(N_{\mathrm{r}}+L+S) \times K$ data matrix $\mathbf{D}_K$, i.e., 
\begin{equation}
    \mathbf{D}_K=\left[\mathbf{r}_1,\mathbf{r}_2,\dots,\mathbf{r}_K\right]\in\mathbb{C}^{(N_{\mathrm{r}}+L+S)\times K}.
\end{equation}

\subsection{FFT for Slow-Time Processing}
As established in standard radar signal processing theory, Doppler processing is then applied to each delay bin (i.e., each row of the data matrix $\mathbf{D}_K$) to estimate the Doppler frequency $f_d$ through slow-time processing, and subsequently determine the target radial velocity $v$ through the relationship $v = \lambda f_d/2$, where $\lambda$ denotes the carrier wavelength. 
Doppler processing can be achieved by applying the FFT operation. Specifically, we conduct the $M_\mathrm{FFT}$-point FFT along the row of the data matrix $\mathbf{D}_K$ to generate the RD map as follows:
\begin{equation}
    \mathrm{P}(n,m)=\frac{1}{K}\left|\sum_{k=0}^{M_\mathrm{FFT}-1}(\mathbf{D}_K)_{n,k}e^{-j2\pi\frac{km}{M_\mathrm{FFT}}}\right|^2,
    \label{doppler}
\end{equation}
where $\mathrm{P}(n,m)$ denotes the element on the $n$-th row and $m$-th column of the RD map and $m=\lfloor\frac{-M_{\mathrm{Per}}}{2}\rfloor,\ldots,\lfloor\frac{-M_{\mathrm{Per}}}{2}\rfloor-1$ denotes the Doppler bin.
Doppler detection is periodic in frequency with a principal period ranging from $-\frac{1}{2T}$ to $\frac{1}{2T}$, which implies that Doppler shifts within the range $|f_d| \leq \frac{1}{2T}$ can be unambiguously detected. 

\subsection{Target Detection}
After obtaining RD map, target detection is conducted. Most radar systems employ CFAR detection, which maintains a fixed false-alarm probability by adjusting detection thresholds based on the noise and interference levels. However, the theoretical CFAR method exhibits a few practical limitations, including that the practical environmental noise and dynamic interference level can not be timely estimated, and the existence of sidelobes affects the detection in the multi-target scenarios, e.g, the sidelobes of a strong target may be falsely detected as a target. 
To this end, the CFAR-CA detector is widely used in practical radar systems, which can dynamically adapt the detection threshold based on both the surrounding signal intensity and the prescribed false-alarm probability. This adaptive approach also effectively suppresses sidelobe interference while preserving detection sensitivity for weak targets, particularly under partial eclipsing conditions where elevated sidelobes of the sequence may induce false alarms.

\begin{figure}[t]
    \centering
    \includegraphics[width=0.4\textwidth]{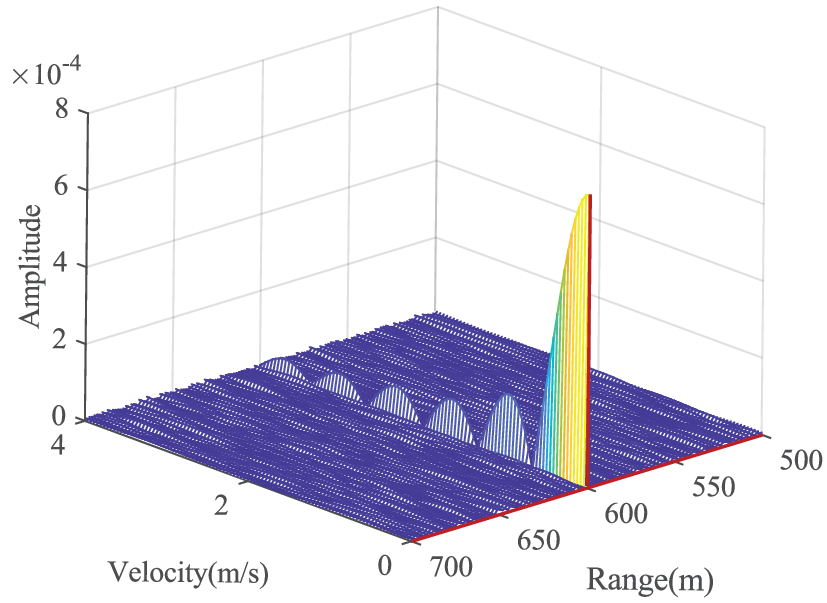 }
    \caption{The RD map example of our proposed waveform for a single target located at 600 m.}
    \label{CFAR-CA}
    \vspace{-0em}
\end{figure}

With the 2D RD map, the commonly used CFAR-CA detector determines the threshold by averaging the values of the 2D neighborhoods. However, our proposed pulse design achieves superior correlation performance along the range dimension, as illustrated by the RD map example in Fig. \ref{CFAR-CA}. This motivates us to design a hierarchical 1D CFAR-CA detector that first detects potential targets along the range dimension and then removes fake targets along the Doppler dimension. 

\subsubsection{CFAR-CA detection along the range dimension}
Let $\mathcal{M} = \{ \mathrm{P}(\hat{n},\hat{m}) \mid \mathrm{P}(\hat{n},\hat{m})$ is a local
maximum in the RD map$\}$ denote the set of all local maxima. For each $\mathrm{P}(\hat{n},\hat{m}) \in \mathcal{M}$, we apply the CFAR-CA detector along the delay/range domain, based on the following binary hypotheses:
\begin{equation}
    \begin{cases}
        \mathcal{H}_0: \mathrm{P}(\hat{n},\hat{m}) = z_{\hat{n}},\\[2pt]
        \mathcal{H}_1: \mathrm{P}(\hat{n},\hat{m}) = s + z_{\hat{n}},
    \end{cases}
\end{equation}
where $z_{\hat{n}}$ includes noise and RSI, while $s$ denotes the target echo component when present. The CFAR-CA decision rule is expressed as:
\begin{equation}
    \mathrm{P}(\hat{n},\hat{m}) \mathop{\gtrless}_{\mathcal{H}_0}^{\mathcal{H}_1} \eta_{\hat{n}},
\end{equation}
with the adaptive threshold:
\begin{equation}
    \eta_{\hat{n}} = \alpha_{\text{CFAR}} \hat{\sigma}_n^2, \quad
    \hat{\sigma}_n^2 = \frac{1}{|\Omega_n|}\sum_{k\in\Omega_n} \mathrm{P}(k,\hat{m}),
\end{equation}
where $\Omega_n$ denotes the set of training cells in the range domain, obtained by excluding the guard cells surrounding the cell-under-test (CUT), $|\Omega_n|$ is its cardinality, and $\alpha_{\text{CFAR}}$ is selected to achieve the desired false-alarm probability.

\subsubsection{CFAR-CA detection along the Doppler dimension}
We denote the potential target set after the first detection as $\mathcal{G} = \{ \mathrm{P}(\hat{n},\hat{m}) \in \mathcal{M} | \mathrm{P}(\hat{n},\hat{m}) > \eta_{\hat{n}} \}$.
For each candidate $\mathrm{P}(\hat{n},\hat{m}) \in \mathcal{G}$, we apply the CFAR-CA detector along the Doppler/velocity domain, performing the CFAR-CA decision:
\begin{equation}
    \mathrm{P}(\hat{n},\hat{m}) \mathop{\gtrless}_{\mathcal{H}_0}^{\mathcal{H}_1} \eta_{\hat{m}},
\end{equation}
with the adaptive threshold:
\begin{equation}
    \eta_{\hat{m}} = \alpha_{\text{CFAR}} \hat{\sigma}_m^2, \quad
    \hat{\sigma}_m^2 = \frac{1}{|\Omega_m|}\sum_{k\in\Omega_m} \mathrm{P}(\hat{n},k),
\end{equation}
where $\Omega_m$ denotes the set of training cells in the Doppler domain. With this step, some fake targets resulting from the Doppler sidelobes will be eliminated. Therefore, the final target detection result is $\mathcal{T} = \{ (\hat{n},\hat{m}) | \mathrm{P}(\hat{n},\hat{m}) \in \mathcal{G} \text{ and } \mathrm{P}(\hat{n},\hat{m}) > \eta_{\hat{m}} \}$. The detection algorithm is summarized in Algorithm \ref{alg:cfar_ca_detection}.

\begin{algorithm}[t]
\caption{ Hierarchical 1D CFAR-CA Detection}
\label{alg:cfar_ca_detection}
\begin{algorithmic}[1]
\REQUIRE 
    Range-Doppler map $\mathrm{P}$; CFAR factor $\alpha_{\text{CFAR}}$.
\ENSURE 
    Final target index set $\mathcal{T}$.
    
\STATE Extract the local maxima set from the RD map: $\mathcal{M} = \{ \mathrm{P}(\hat{n},\hat{m}) \}$.

\STATE \textit{\% Stage 1: CFAR-CA along the range dimension}
\STATE For each $\mathrm{P}(\hat{n},\hat{m}) \in \mathcal{M}$, compute the range adaptive threshold $\eta_{\hat{n}}$ using training cells $\Omega_n$.
\STATE Obtain the candidate target set: \\
       $\mathcal{G} = \{ \mathrm{P}(\hat{n},\hat{m}) \in \mathcal{M} \mid \mathrm{P}(\hat{n},\hat{m}) > \eta_{\hat{n}} \}$.
\STATE \textit{\% Stage 2: CFAR-CA along the Doppler dimension}
\STATE For each $\mathrm{P}(\hat{n},\hat{m}) \in \mathcal{G}$, compute the Doppler adaptive threshold $\eta_{\hat{m}}$ using training cells $\Omega_m$.
\RETURN The final detected target set: \\
        $\mathcal{T} = \{ (\hat{n},\hat{m}) \mid \mathrm{P}(\hat{n},\hat{m}) \in \mathcal{G} \text{ and } \mathrm{P}(\hat{n},\hat{m}) > \eta_{\hat{m}} \}$.
\end{algorithmic}
\end{algorithm}

\section{Optimization of Mismatched Filter}
\label{section 5}


According to Eq. (\ref{eq:receivedX}), the received signal structure depends on the delay of the target. 
As a result, the target detection at different delays is impacted by different factors, as illustrated in Fig.~\ref{performance metrics}. More specifically, for target delays satisfying $n_{\tau} \leq N_{\mathrm{r}}$ and $n_{\tau} \geq H+N_{\mathrm{r}}$, the detection performance is exclusively determined by the effective signal-to-interference-plus-noise ratio (SINR) or signal-to-noise ratio (SNR). 
This is attributed to the perfect cancellation of correlation sidelobes obtained via joint processing of the complementary and inverse-phase sequences.
In contrast, when $N_{\mathrm{r}}<n_{\tau} < H+N_{\mathrm{r}}$, the detection performance under the CFAR-CA detector is impacted by the sidelobes because of the partial eclipsing phenomenon. 
Consequently, the development of a new performance metric is essential for accurately characterizing the CFAR-CA detection performance for the target in the minimum range. In this section, we first establish the sensing performance metric for targets at different ranges and then optimize the weighting factor according to the proposed criterion.


\subsection{Sensing Performance Metrics} \label{III-C}



\begin{figure}[t]
    \centering
    \includegraphics[width=0.4\textwidth]{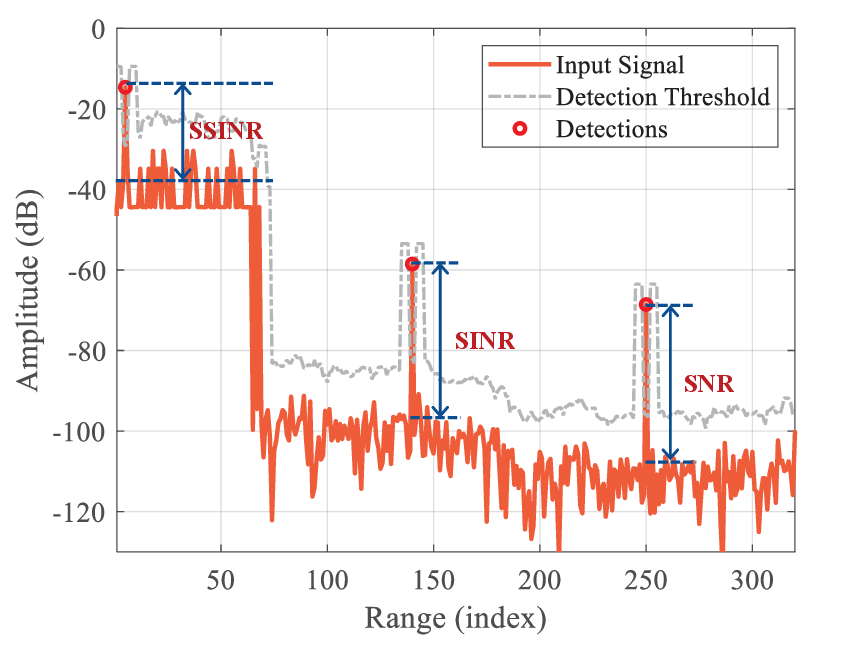 }
    \caption{Illustration of CFAR-CA detection and performance metrics.}
    \label{performance metrics}
    \vspace{-0em}
\end{figure}

For target delays satisfying $n_{\tau} \leq N_{\mathrm{r}}$ and $n_{\tau} \geq H+N_{\mathrm{r}}$, the correlation output at the  target delay $n_\tau$ can be expressed as
\begin{equation}
    \begin{aligned}
        &r_k(n_{\tau}) = \mathbf{f}_k^\mathsf{H} \mathbf{J}_{n_{\tau}} \mathbf{y}_k \\
                      &= \alpha e^{j 2 \pi f_d k T} \mathbf{f}_k^\mathsf{H} \mathbf{J}_{n_{\tau}} \mathbf{x}_{k,n_{\tau}} + \mathbf{f}_k^\mathsf{H} \mathbf{J}_{n_{\tau}} \beta \mathbf{P}_{\text{SI}} \mathbf{s}_k + \mathbf{f}_k^\mathsf{H} \mathbf{J}_{n_{\tau}}\mathbf{P}_{\text{z}} \mathbf{z}_k.
    \end{aligned}
\end{equation}
Therefore, the effective SINR/SNR of radar echoes over a CPI with $K$ pulses is given by
\begin{equation}
    \begin{aligned}
        F(n_\tau)
        &=\dfrac{K\|\alpha e^{j2\pi f_dkT}\mathbf{f}_k^{\mathsf{H}}\mathbf{J}_{n_{\tau}}\mathbf{x}_{k,n_{\tau}}\|^2}{\mathbb{E}[\|\mathbf{f}_k^\mathsf{H} \mathbf{J}_{n_{\tau}} \beta \mathbf{P}_{\text{SI}} \mathbf{s}_k + \mathbf{f}_k^\mathsf{H} \mathbf{J}_{n_{\tau}} \mathbf{P}_{\text{z}}\mathbf{z}_k\|^2]}\\
        &=\dfrac{K\|\alpha e^{j2\pi f_dkT}\mathbf{f}_k^{\mathsf{H}}\mathbf{J}_{n_{\tau}}\mathbf{x}_{k,n_{\tau}}\|^2}{\mathbb{E}\big[\|\mathbf{f}_k^{\mathsf{H}}\mathbf{J}_{n_{\tau}}\beta\mathbf{P}_{\text{SI}}\mathbf{s}_k\|^2\big]+\mathbb{E}\big[\|\mathbf{f}_k^{\mathsf{H}}\mathbf{J}_{n_{\tau}}\mathbf{P}_{\text{z}}\mathbf{z}_k\|^2\big]},
    \end{aligned}
\end{equation}
where $\mathbf{s}_k$ and $\mathbf{z}_k$ are statistically independent Gaussian random vectors. 

For target delays satisfying $N_{\mathrm{r}}<n_{\tau} < H+N_{\mathrm{r}}$, the detection performance is primarily governed by the peak-to-average sidelobe ratio (PASLR) $\gamma(n_\tau)$, which is defined to quantify the sidelobe levels produced by the partially received high-power component under the CFAR-CA detector. 
When applying CFAR-CA detection near boundaries where one-sided training cells are insufficient around the CUT, we need to dynamically redistribute training cells around the CUT while maintaining constant total training length. Consequently, PASLR $\gamma(n_\tau)$ can be mathematically characterized into three distinct operational cases.
When $N_{\mathrm{r}}+1\leq n_\tau\leq N_{\mathrm{r}}+\frac{g_\mathrm{c}}{2}+1$, this metric can be expressed as
\begin{equation}
    \gamma(n_\tau)=\dfrac{t_\text{c}\left(\displaystyle\sum_{k=0}^{K-1}r_{\mathrm{h},k}[n_\tau]\right)^2}{\displaystyle\sum_{n=n_\tau+\frac{g_\mathrm{c}}{2}+1}^{n_\tau+\frac{g_\mathrm{c}}{2}+t_\mathrm{c}}\left(\sum_{k=0}^{K-1}r_{\mathrm{h},k}[n]\right)^2},
\end{equation}
where \( g_\mathrm{c} \) and \( t_\mathrm{c} \) denote the number of guard cells and training cells in the 1D CFAR-CA detector, respectively; $r_{\mathrm{h},k}[n]$ represents the correlation value between the high-power sequence $\mathbf{h}_k^{\prime}=[\mathbf{h}_k^\mathsf{T},\mathbf{0}_{1 \times n_\tau}]^\mathsf{T}$ and its partially received sequence $\mathbf{h}_{k,n_\tau}=[\mathbf{0}_{1\times (H+N_{\mathrm{r}})
},h_{H-n_\tau+N_{\mathrm{r}},k},\dots,h_{H-1,k}]^\mathsf{T}$, which can be formally expressed as
\begin{equation}
    r_{\mathrm{h},k}[n]=\mathbf{h}_k^{\prime\mathsf{H}}\mathbf{J}_n\mathbf{h}_{k,n_\tau}.
\end{equation}
When $N_{\mathrm{r}}+\frac{g_\mathrm{c}}{2}+1< n_\tau< N_{\mathrm{r}}+\frac{g_\mathrm{c}}{2}+\frac{t_\mathrm{c}}{2}+1$, the PASLR $\gamma(n_\tau)$ can be expressed as
\begin{equation}
    \begin{aligned}
        &\gamma(n_{\tau})=\\
        &\frac{t_\mathrm{c}\left(\displaystyle\sum_{k=0}^{K-1}r_{\mathrm{h},k}[n_{\tau}]\right)^{2}}{\displaystyle\sum_{n=N_{\mathrm{r}}+1}^{n_{\tau}-\frac{g_\mathrm{c}}{2}-1}\left(\sum_{k=0}^{K-1}r_{\mathrm{h},k}[n]\right)^{2}+\sum_{n=n_{\tau}+\frac{g_\mathrm{c}}{2}+1}^{N_{\mathrm{r}}+g_\mathrm{c}+t_\mathrm{c}+1}\left(\sum_{k=0}^{K-1}r_{\mathrm{h},k}[n]\right)^{2}}.
    \end{aligned}
\end{equation}
When $N_{\mathrm{r}}+\frac{g_\mathrm{c}}{2}+\frac{t_\mathrm{c}}{2}+1\leq n_\tau<N_{\mathrm{r}}+H$, the PASLR $\gamma(n_\tau)$ can be expressed as
\begin{equation}
    \begin{aligned}
        &\gamma(n_{\tau})=\\
        &\frac{t_\mathrm{c}\left(\displaystyle\sum_{k=0}^{K-1}r_{\mathrm{h},k}[n_{\tau}]\right)^{2}}{\displaystyle\sum_{n=n_{\tau}-\frac{g_\mathrm{c}}{2}-\frac{t_\mathrm{c}}{2}}^{n_{\tau}-\frac{g_\mathrm{c}}{2}-1}\left(\sum_{k=0}^{K-1}r_{\mathrm{h},k}[n]\right)^{2}+\sum_{n=n_{\tau}+\frac{g_\mathrm{c}}{2}+1}^{n_{\tau}+\frac{g_\mathrm{c}}{2}+\frac{t_\mathrm{c}}{2}}\left(\sum_{k=0}^{K-1}r_{\mathrm{h},k}[n]\right)^{2}}
    \end{aligned}
\end{equation}

\begin{figure*}[t]
{\small
    \centering   
    \vspace{2pt}
        \begin{equation}\label{FuncF}
        \begin{aligned}
            F(n_\tau)=
            \begin{cases}
            \text{SINR}=\dfrac{K|\alpha|^2(P_\mathrm{l}L)^2}{|\beta|^2(L-n_\tau)P_\mathrm{l}^2+N_0BP_\mathrm{l}L},\hfill \forall n_\tau \in (0,N_{\mathrm{r}} ],
            \\
            \text{SSINR}=\dfrac{K|\alpha|^2\big(P_\mathrm{h}(n_\tau-N_{\mathrm{r}})+wP_\mathrm{l}L\big)^2}{\dfrac{K|\alpha|^2(P_\mathrm{h}(n_\tau-N_{\mathrm{r}}))^2}{\gamma(n_\tau)}+|\beta|^2\big((n_\tau-N_{\mathrm{r}}) P_\mathrm{h}P_\mathrm{l}+(L-n_\tau)w^2P_\mathrm{l}^2\big)+N_0B\big(P_\mathrm{h}(n_\tau-N_{\mathrm{r}})+w^2P_\mathrm{l}L\big)},\hfill \forall n_\tau \in (N_{\mathrm{r}},L ],
            \\
            \text{SSINR}=\dfrac{K|\alpha|^2\big(P_\mathrm{h}(n_\tau-N_{\mathrm{r}})+wP_\mathrm{l}L\big)^2}{\dfrac{K|\alpha|^2(P_\mathrm{h}(n_\tau-N_{\mathrm{r}}))^2}{\gamma(n_\tau)}+|\beta|^2\big((n_\tau-N_{\mathrm{r}}) P_\mathrm{h}P_\mathrm{l}\big)+N_0B\big(P_\mathrm{h}(n_\tau-N_{\mathrm{r}})+w^2P_\mathrm{l}L\big)},\hfill \forall n_\tau \in (L,L+N_{\mathrm{r}} ],
            \\
            \text{SSINR}=\dfrac{K|\alpha|^2\big(P_\mathrm{h}(n_\tau-N_{\mathrm{r}})+wP_\mathrm{l}L\big)^2}{\dfrac{K|\alpha|^2(P_\mathrm{h}(n_\tau-N_{\mathrm{r}}))^2}{\gamma(n_\tau)}+|\beta|^2LP_\mathrm{h}P_\mathrm{l}+N_0B\big(P_\mathrm{h}(n_\tau-N_{\mathrm{r}})+w^2P_\mathrm{l}L\big)},\hfill \forall n_\tau \in (L+N_{\mathrm{r}},H+N_{\mathrm{r}}),
            \\
            \text{SINR}=\dfrac{K|\alpha|^2\big(P_\mathrm{h}H+wP_\mathrm{l}L\big)^2}{|\beta|^2 (H +N_{\mathrm{r}}+ L - n_\tau) P_\mathrm{h} P_\mathrm{l} + N_0 B (P_\mathrm{h} H + w^2 P_\mathrm{l} L)},\hfill \forall n_\tau \in [H+N_{\mathrm{r}},H+N_{\mathrm{r}}+L ],
            \\
            \text{SNR}=\dfrac{K|\alpha|^2\big(P_\mathrm{h}H+wP_\mathrm{l}L\big)^2}{N_0 B (P_\mathrm{h} H + w^2 P_\mathrm{l} L)},\hfill \forall n_\tau \in (H+N_{\mathrm{r}}+L,S ],
            \\
            \text{SNR}=\dfrac{K|\alpha|^2\big(P_\mathrm{h}H+wP_\mathrm{l}(L+S-n_\tau)\big)^2}{N_0 B \big(P_\mathrm{h} H + w^2 P_\mathrm{l} (L+S-n_\tau)\big)},\hfill \forall n_\tau \in (S,L+S],
            \\
            \text{SNR}=\dfrac{K|\alpha|^2\big(P_\mathrm{h}H\big)}{N_0 B},\hfill \forall n_\tau \in (L+S,N_{\mathrm{r}}+L+S],
            \end{cases}
        \end{aligned}
        \end{equation}
    \vspace{4pt}
}
    \hrule
\end{figure*}

Accounting for sidelobe effects, we define the signal-to-sidelobe-plus-interference-plus-noise ratio (SSINR) for $N_{\mathrm{r}}<n_{\tau} < H+N_{\mathrm{r}}$. Considering all the cases, the complete expression for the CFAR-CA-based sensing performance metric $F(n_\tau)$ is derived as shown in Eq. (\ref{FuncF}) at the top of the next page.

\subsection{Determination of the Optimal $w$}
The target detection performance is determined by the sensing metric $F(n_\tau)$, which is a function of the tunable weight $w$ for $N_{\mathrm{r}}<n_{\tau} \leq L+S$. Consequently, we can optimize the weighting parameter $w$ of the mismatched filter for $N_{\mathrm{r}}<n_{\tau} \leq L+S $ to maximize this performance metric:
\begin{equation}
    \begin{aligned}
        \underset{w}{\text{maximize}} \quad & F(n_\tau;w)\\
    \end{aligned}
    \label{op}
\end{equation}
It is obvious that when $w=1$, the mismatched filter degenerates to conventional matched filtering, and when $w=0$ ($w \to \infty$), only the high-power (low-power) sequence is utilized for sensing. The intermediate values of $w$ generate weighted combinations of both sequences' correlation outputs, establishing a continuous trade-off space. 
Since it is a single-variable optimization problem, the optimal value can be achieved by:
\begin{equation}
    \left.\frac{\partial F(n_\tau; w)}{\partial w}\right|_{w=w^*} = 0.
    \label{NOC}
\end{equation}
Based on Eq. (\ref{NOC}), it is easy to identify that for $n_{\tau} > H+N_{\mathrm{r}}+L$, the optimal value is $w=1$, which corresponds to the matched filter. 

For $N_{\mathrm{r}} <n_{\tau} \leq H+N_{\mathrm{r}}$, the optimal $w$ for the mismatched filter also depends on other parameters, as analyzed below. For $n_\tau \in (N_{\mathrm{r}}, L]$, the optimal $w$ is given by:
\begin{equation}
    w^*(n_\tau,\sigma) = \dfrac{\dfrac{KL|\alpha|^2P_{\text{h}}(n_\tau-N_{\mathrm{r}})}{\gamma(n_\tau)} + L|\beta|^2P_{\text{l}} + N_0BL}{|\beta|^2(L-n_\tau)P_{\text{l}} + N_0BL}.
    \label{op1}
\end{equation}
We can see that the optimal weight $w^*(n_\tau,\sigma)$ is a function of both the delay $n_\tau$ and the RCS $\sigma$. However, RCS is not a systematic parameter and the ISAC BS has no a priori knowledge of the RCS for a target. Thus, to determine the optimal value of $w^*$, we need to find a proper value of $\sigma$. To this end, we substitute Eq. (\ref{op1}) into the SSINR formulation, and thus we have $\text{SSINR} = f(n_\tau,\sigma)$, which is a function of $\sigma$. With $f(n_\tau,\sigma)$, we want to find the minimum detectable RCS $\sigma^*$ that can support reliable detection. 

\newtheorem{proposition}{Proposition}
\begin{proposition} \label{prop:monotonicity}
For any given delay bin $n_\tau \in (N_\mathrm{r}, L]$, the function $f(n_\tau,\sigma)$ is strictly monotonically increasing with respect to $\sigma$ for all $\sigma > 0$. Consequently, for any prescribed detection threshold $\rho > 0$, there exists a unique solution $\sigma^* > 0$ satisfying $f(n_\tau,\sigma^*) = \rho$.
\end{proposition}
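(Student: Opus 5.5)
The plan is to compute the optimized SSINR $f(n_\tau,\sigma)$ in closed form rather than substitute Eq.~(\ref{op1}) by brute force. Fix $n_\tau\in(N_{\mathrm r},L]$ and introduce the following shorthand:
\begin{align}
x&=K|\alpha|^2, \qquad A=P_{\mathrm h}(n_\tau-N_{\mathrm r}), \qquad B=P_{\mathrm l}L,\\
c_1&=|\beta|^2AP_{\mathrm l}+N_0BA,\\
c_2&=|\beta|^2(L-n_\tau)P_{\mathrm l}^2+N_0BP_{\mathrm l}L.
\end{align}
With this notation, the second branch of Eq.~(\ref{FuncF}) reads
\begin{equation}
\mathrm{SSINR}(w)=\frac{x(A+wB)^2}{d_1+w^2c_2}, \qquad d_1=\frac{xA^2}{\gamma(n_\tau)}+c_1 .
\end{equation}
By the radar equation, $x=K G_{\mathrm t}G_{\mathrm r}\lambda^2\sigma/((4\pi)^3R^4)$ is a positive constant multiple of $\sigma$. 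The quantities $A,B,c_1,c_2,\gamma(n_\tau)$ do not depend on $\sigma$, because $\gamma(n_\tau)$ is built only from the transmitted sequences. Moreover, $A,B,c_1>0$, and $c_2>0$ because $n_\tau\le L$ and $N_0BP_{\mathrm l}L>0$.

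\textbf{Step 1 (optimize over $w$).} Set $v=(1,w)^{\mathsf T}$, $\mathbf a=(A,B)^{\mathsf T}$ and $\mathbf D=\mathrm{diag}(d_1,c_2)$. Then $\mathrm{SSINR}=x(\mathbf a^{\mathsf T}v)^2/(v^{\mathsf T}\mathbf D v)$. By Cauchy--Schwarz in the $\mathbf D$-inner product, the maximum equals $x\,\mathbf a^{\mathsf T}\mathbf D^{-1}\mathbf a$, and it is attained at $v\propto \mathbf D^{-1}\mathbf a$, i.e.\ at $w^*=Bd_1/(Ac_2)$. Multiplying numerator and denominator by $L/(AP_{\mathrm l})$ shows that this $w^*$ coincides with Eq.~(\ref{op1}). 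Hence no substitution algebra is needed, and
\begin{equation}
f(n_\tau,\sigma)=\frac{A^2}{A^2/\gamma(n_\tau)+c_1/x}+\frac{xB^2}{c_2}.
\end{equation}

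\textbf{Step 2 (strict monotonicity).} Both terms are strictly increasing in $x>0$:
\begin{itemize}
\item the first, because $c_1/x$ strictly decreases and $c_1>0$;
\item the second, because it is linear with positive slope $B^2/c_2$.
\end{itemize}
Since $x$ is a strictly increasing linear function of $\sigma$, $f(n_\tau,\cdot)$ is strictly increasing on $\sigma>0$. It is also continuous there.

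\textbf{Step 3 (existence and uniqueness).} As $\sigma\to0^+$, both terms tend to $0$. As $\sigma\to\infty$, the second term diverges. By the intermediate value theorem, for every $\rho>0$ there is some $\sigma^*>0$ with $f(n_\tau,\sigma^*)=\rho$, and strict monotonicity makes this $\sigma^*$ unique.

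The only delicate point is Step 1, namely checking that the stationary point in Eq.~(\ref{NOC}) is the global maximizer, so that $f$ really equals the Rayleigh-quotient maximum. This is where I would spend care, including confirming that $w^*>0$ so the sign of $A+wB$ causes no trouble. The Cauchy--Schwarz argument settles it, since it maximizes over all $w\in\mathbb R$ directly.
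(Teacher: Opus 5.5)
Your argument is correct, and it takes a genuinely different route from the paper. The paper substitutes $w^*=mx+n$ into the SSINR and treats $f$ as a rational function of $x=|\alpha|^2$. It then shows $f'>0$ by arguing that the quartic numerator of $f'$ has only positive coefficients. That requires expanding the quartic and a separate sign check on $C_2$: the term $\frac{2DAc}{m}-A^2e$ is not obviously positive and becomes so only after using the specific relations among the constants. The remaining coefficients are simply asserted positive ``by expanding''. This is mechanical and needs no structural insight, but it hides why the result holds.

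You instead recognize the SSINR as a generalized Rayleigh quotient $x(\mathbf a^{\mathsf T}v)^2/(v^{\mathsf T}\mathbf D v)$. That yields the closed form $f=\frac{A^2}{A^2/\gamma(n_\tau)+c_1/x}+\frac{xB^2}{c_2}$: a saturating high-power term, capped by the partial-eclipsing sidelobe level $\gamma(n_\tau)$, plus a linear low-power term. Monotonicity is then immediate with no coefficient bookkeeping.

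Your route also delivers two things the paper's proof does not. First, Cauchy--Schwarz certifies that Eq.~(\ref{op1}) is the global maximizer over all $w\in\mathbb R$, not merely a stationary point of Eq.~(\ref{NOC}). Second, your limits $f\to0$ as $\sigma\to0^+$ and $f\to\infty$ as $\sigma\to\infty$ supply the \emph{existence} half of the proposition. The paper stops at $f'>0$, which by itself gives only uniqueness of $\sigma^*$.

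Every SSINR branch of Eq.~(\ref{FuncF}) has numerator $x(A+wP_{\mathrm l}L)^2$ and denominator $d_1(x)+w^2c_2$, with $d_1$ affine in $x$ with positive constant term and $c_2>0$ independent of $\sigma$. So your argument extends verbatim to $n_\tau\in(L,H+N_{\mathrm r})$, which the paper covers only by ``analogous mathematical procedures''.

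The one thing to fix is notation: your shorthand $B=P_{\mathrm l}L$ clashes with the bandwidth $B$ in the $N_0B$ terms of $c_1$ and $c_2$, so rename it.
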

\begin{IEEEproof}
The detailed proof is provided in Appendix \ref{app:proof_monotonicity}.
\end{IEEEproof}

Denoting the minimum detectable SNR by $\rho$, Proposition \ref{prop:monotonicity} establishes $\sigma^*$ as the corresponding minimum detectable RCS.
In practice, since the exact RCS of a target is unknown, the system is configured using the weight optimized for this lower bound, i.e., $w = w(n_\tau, \sigma^*)$.
Under this weight design, since the channel gain $|\alpha|^2$ is proportional to $\sigma$, the achieved SSINR, denoted by $f(n_\tau, \sigma \mid w^*(n_\tau, \sigma^*))$, maintains strict monotonic growth with respect to the actual target RCS $\sigma$.
Consequently, any target with an RCS $\sigma \ge \sigma^*$ will automatically satisfy the detection condition $f(n_\tau, \sigma \mid w^*(n_\tau, \sigma^*)) \ge \rho$. This property demonstrates that $w(n_\tau, \sigma^*)$ serves as the robust and optimal weight solution when the exact target RCS is unavailable at the ISAC BS.
Therefore, for any delay $n_\tau \in (N_{\mathrm{r}}, L]$, the optimal solution is finally formulated as
\begin{equation}
    w^*(n_\tau) = \dfrac{\dfrac{KL{G_\mathrm{t}G_\mathrm{r}\lambda^2\sigma^*}P_{\text{h}}(n_\tau-N_{\mathrm{r}})}{\gamma(n_\tau){(4\pi)^3(n_\tau T_{\mathrm{p}})^4}} + L|\beta|^2P_{\text{l}} + N_0BL}{|\beta|^2(L-n_\tau)P_{\text{l}} + N_0BL}.
\end{equation}
For delay $n_\tau \in (L, H+N_{\mathrm{r}})$, the optimal $w$ can be derived through analogous mathematical procedures.

For $n_\tau \in [H+N_{\mathrm{r}}, H+N_{\mathrm{r}}+L]$, the optimal $w$ can be directly derived from Eq. (\ref{NOC}) as 
\begin{equation}
    w^*(n_\tau)=\frac{|\beta|^2(H+N_{\mathrm{r}}+L-n_\tau)P_\mathrm{l}}{N_0BH}+1,
\end{equation}
where the optimal $w^*(n_\tau)$ relies entirely on the delay bin $n_\tau$, effectively decoupling the weight design from the unknown target RCS.


Based on the preceding analysis, the $w^*(n_\tau)$ is derived as
{\small
\begin{equation}
    \begin{aligned}
         &w^*(n_\tau)=\\
         &\begin{cases}                            \dfrac{\frac{KL{G_\mathrm{t}G_\mathrm{r}\lambda^2\sigma^*}P_{\text{h}}(n_\tau-N_{\mathrm{r}})}{\gamma(n_\tau){(4\pi)^3(n_\tau T_{\mathrm{p}})^4}} + L|\beta|^2P_{\text{l}} + N_0BL}{|\beta|^2(L-n_\tau)P_{\text{l}} + N_0BL},\hfill n_\tau \in (N_{\mathrm{r}}, L],\\
         \dfrac{\frac{K{G_\mathrm{t}G_\mathrm{r}\lambda^2\sigma^*}P_{\text{h}}(n_\tau-N_{\mathrm{r}})}{\gamma(n_\tau){(4\pi)^3(n_\tau T_{\mathrm{p}})^4}} + |\beta|^2P_{\text{l}} + N_0B}{N_0B},\hfill n_\tau \in (L, L+N_{\mathrm{r}}],\\
         \frac{\frac{KG_\mathrm{t}G_\mathrm{r}\lambda^2\sigma^*(P_{\text{h}}(n_\tau-N_{\mathrm{r}}))^2}{\gamma(n_\tau)(4\pi)^3(n_\tau T_{\mathrm{p}})^4} + |\beta|^2LP_\text{h}P_\text{l}}{ N_0BP_{\text{h}}(n_\tau-N_{\mathrm{r}})}+1, \hfill n_\tau \in (L+N_{\mathrm{r}}, H+N_{\mathrm{r}}),\\
         \frac{|\beta|^2(H+N_{\mathrm{r}}+L-n_\tau)P_\mathrm{l}}{N_0BH}+1, \hfill n_\tau \in [H+N_{\mathrm{r}}, H+N_{\mathrm{r}}+L],\\
         1,\hfill  n_\tau \in (H+N_{\mathrm{r}}+L,L+S].\\
         \end{cases}
    \end{aligned}
    \label{optimalw}
\end{equation}
}

This result provides a closed-form solution for the optimal weight $w^*$, balancing the contributions of high-power and low-power sequences while maximizing the SSINR under the given radar system constraints.


\section{Performance Evaluation} \label{section 6}

The simulation setup of the OFDM system follows 3GPP standards \cite{38211}, and the parameters are listed in Table \ref{tab:para}.
Since the recovery time is generally insignificant relative to the pulse duration, i.e., $T_{\mathrm{r}}\ll T_{\mathrm{h}}$, we ignore this transition time in our simulations. It should be noted that the proposed waveform design and associated signal processing techniques remain fully applicable to systems with a recovery time.
The duration of the sensing symbol is the same as the duration of the data symbol to keep the compatibility of the frame structure. The sensing symbol is inserted every 14 symbols so that the sensing overhead is  $\frac{1}{14} = 7\%$. 

\begin{table}[t]
    \centering
    \caption{Simulation Parameters for Sensing}
    \label{tab:para}
    \begin{tabular}{@{}cc}
    \toprule
    \textbf{Parameter} & \textbf{Value}  \\
    \midrule
    Carrier frequency  $f_\mathrm{c}$ & 28 GHz\\
    Bandwidth  $B$ & 100 MHz  \\
    Pulse repetition interval $T$ & 0.125 ms\\
    Number of PRIs per CPI $K$ & 32\\
    Transmit power $P_{\mathrm{h}}$ & 53 dBm  \\
    Transmit power $P_{\mathrm{l}}$ & 35 dBm  \\
    Symbol duration $T_{\mathrm{t}}$ & 8.92 $\mu$s  \\
    high-power waveform duration $T_\mathrm{h}$ & 1.28 $\mu$s \\
    low-power waveform duration $T_\mathrm{l}$ & 0.64 $\mu$s \\
    Length of high-power sequence $H$ & 128\\
    Length of low-power sequence $L$ & 64\\
    Noise PSD $N_0$ & -174 dBm/Hz\\
    Receiver noise figure & 5 dB\\
    TX \& RX antenna gain $G_\mathrm{t}$ \& $G_\mathrm{r}$ & 20 dBi\\
    Target RCS $\sigma$ & -10 dBsm \\
    Probability of false alarm $P_\mathrm{FA}$ & $10^{-5}$\\
    Chip duration $T_\text{p}$ & 0.01$\mu$s\\
    Minimum detectable SNR $\rho$ & 15 dB\\
    \bottomrule
    \end{tabular}
    \vspace{-1em}
\end{table}

\begin{figure*}[t]
    \centering
    \subfigure[Detection probability with SIC = 100 dB]
    {\includegraphics[width = 0.32\linewidth]{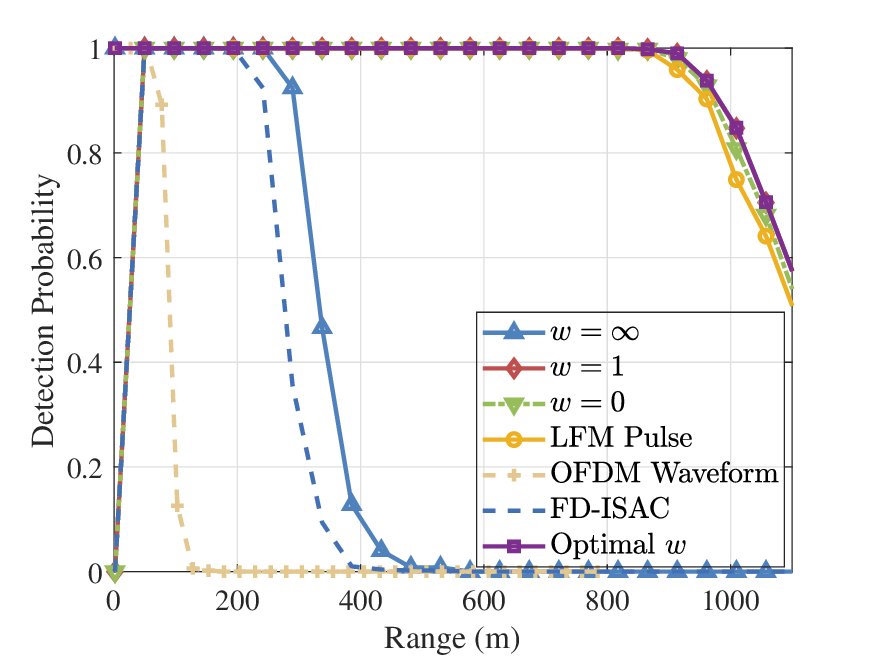}\label{PdSIC=100}}
    \subfigure[Detection probability with SIC = 110 dB]
    {\includegraphics[width=0.32\textwidth]{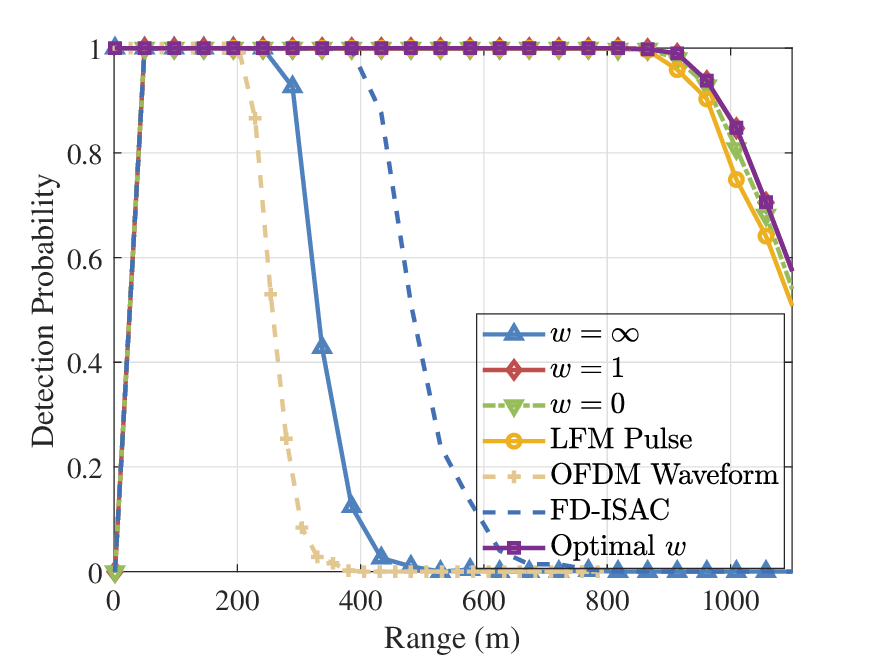}\label{PdSIC=110}}
    \subfigure[Detection probability with SIC = 120 dB]
    {\includegraphics[width=0.32\linewidth]{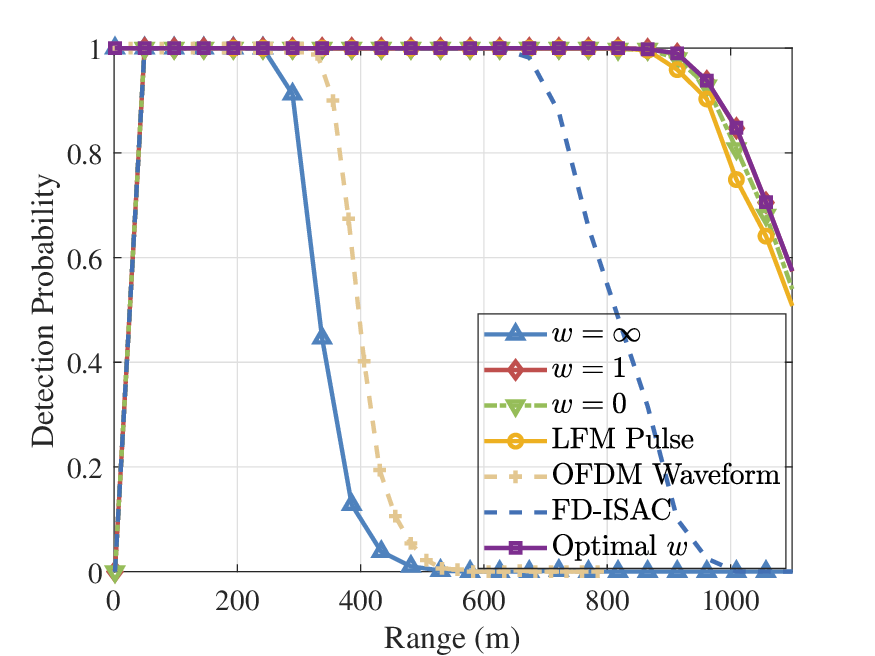}\label{PdSIC=120}}
    \caption{Detection probability under different SIC levels.
}\label{Pd}
    \vspace{-0em}
\end{figure*}

\begin{figure*}[t]
    \centering
    \subfigure[Minimum Detectable RCS with SIC = 100 dB]
    {\includegraphics[width = 0.32\linewidth]{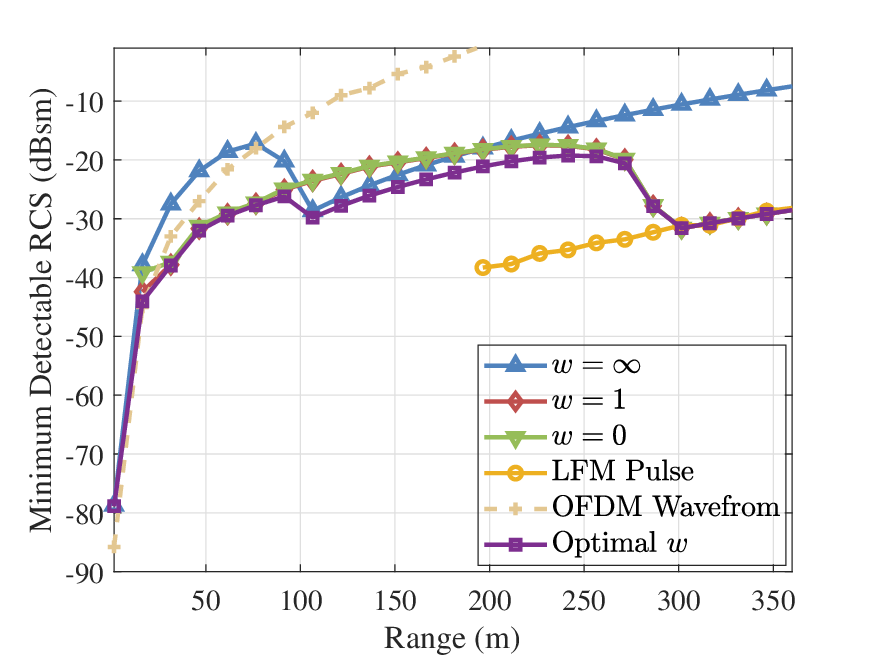}\label{minRCS_SIC=100}}
    \subfigure[Minimum Detectable RCS with SIC = 110 dB]
    {\includegraphics[width=0.32\textwidth]{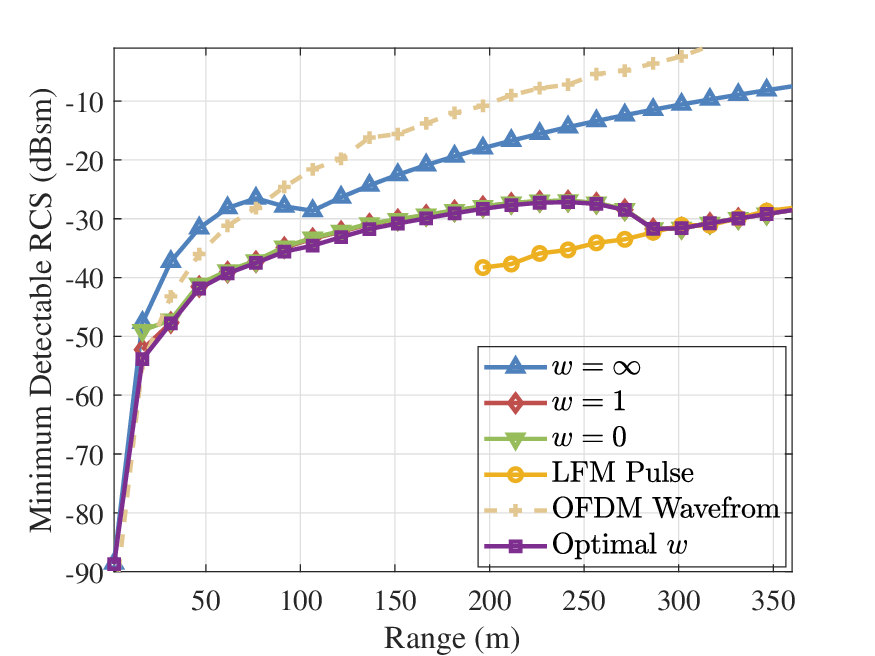}\label{minRCS_SIC=110}}
    \subfigure[Minimum Detectable RCS with SIC = 120 dB]
    {\includegraphics[width=0.32\linewidth]{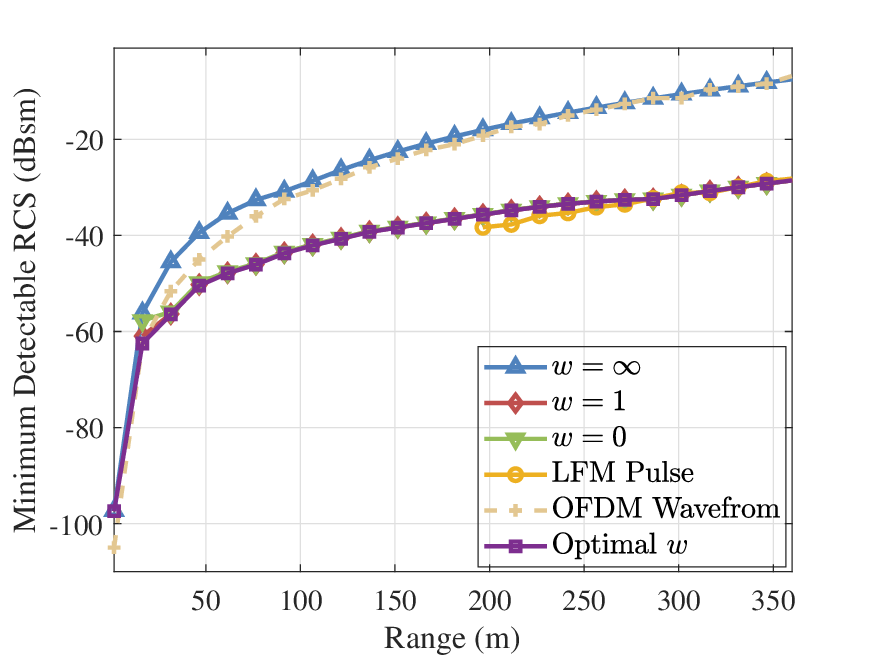}\label{minRCS_SIC=120}}
    \caption{Minimum detectable RCS for short-range targets under different SIC levels.
}\label{minRCS}
    \vspace{-0em}
\end{figure*}

The detection probability curves of the proposed waveform under different SIC ratios with different weighting factors are shown in Fig. \ref{Pd}. The comparison is conducted for three sensing waveforms: the LFM pulse, the continuous OFDM waveform, and the pulse waveform design in \cite{waveform2022} that employs continuous communication-signal transmission during conventional radar's silent periods, and the receiver is deactivated during high-power pulse transmission and enabled during communication-signal transmission.

As shown in Fig. \ref{PdSIC=100}, regardless of only using high-power sequence ($w=0$), or matched filtering ($w=1$), both methods exhibit unreliable detection regions in the short range, resulting in a detection blind range. Furthermore, when relying solely on low-power signals for sensing ($w=\infty$), the limited energy restricts the detection range. In contrast, the proposed waveform with the optimal $w$ design achieves comprehensive coverage from short-range to long-range detection. As illustrated in Fig. \ref{PdSIC=110} and Fig. \ref{PdSIC=120}, the proposed sensing waveform demonstrates similar advantages under high SIC conditions.
Compared to the LFM pulse, the proposed waveform achieves a slightly longer maximum detection range, benefiting from the additional energy provided by the low-power signal. Compared to the OFDM waveform, the detection range is significantly enlarged. Moreover, the performance of the OFDM waveform highly relies on the SIC capability. Nevertheless, the proposed waveform maintains reliable detection performance for both short-range and long-range targets, even under low SIC conditions. 
This robustness follows from the fact that the proposed waveform preserves the silent period of conventional pulsed radars. 

\begin{figure*}[t]
    \centering
    \subfigure[short-range case]
    {\includegraphics[width = 0.41\linewidth]{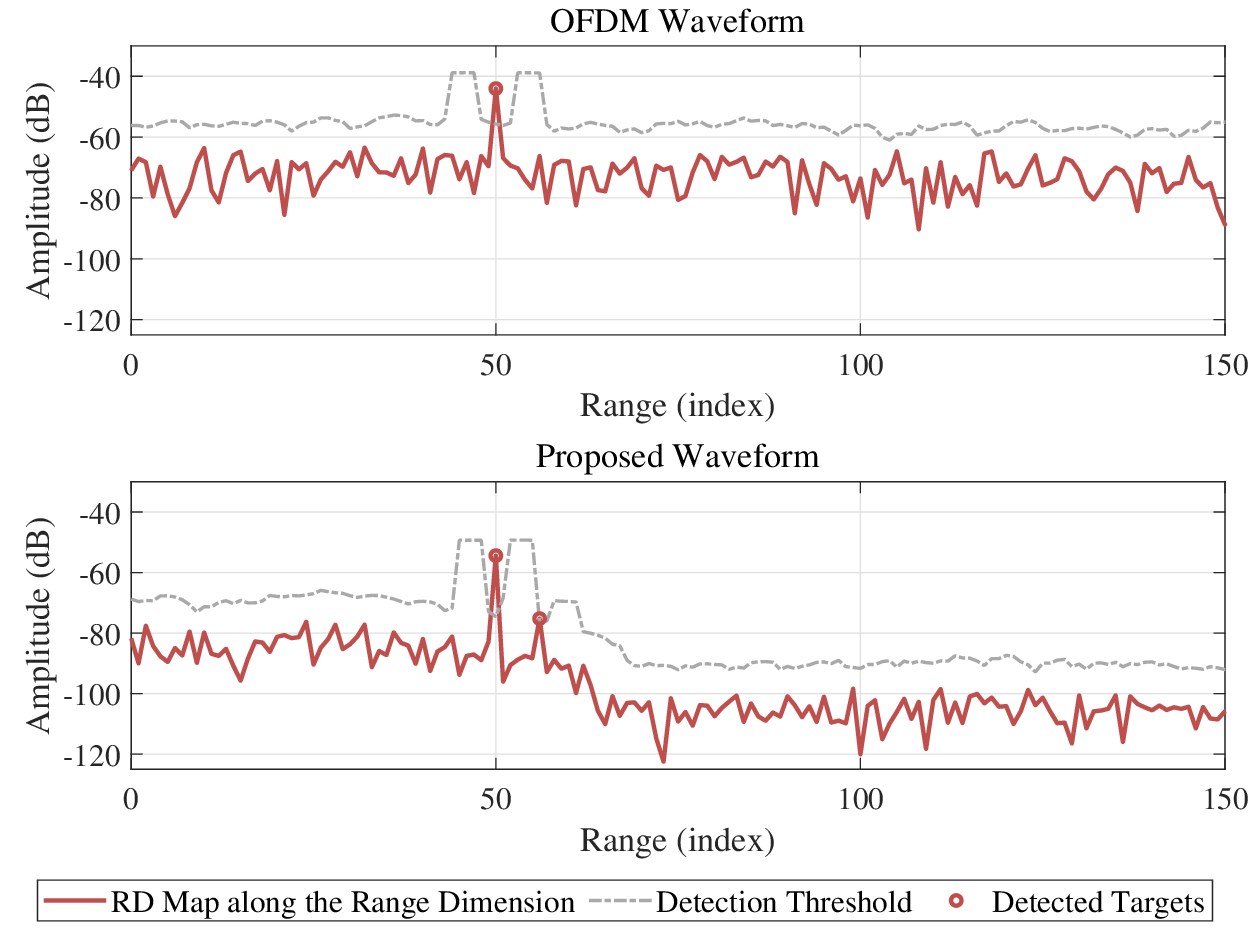}\label{fig: MOFDM}}
    \subfigure[long-range case]
    {\includegraphics[width=0.58\textwidth]{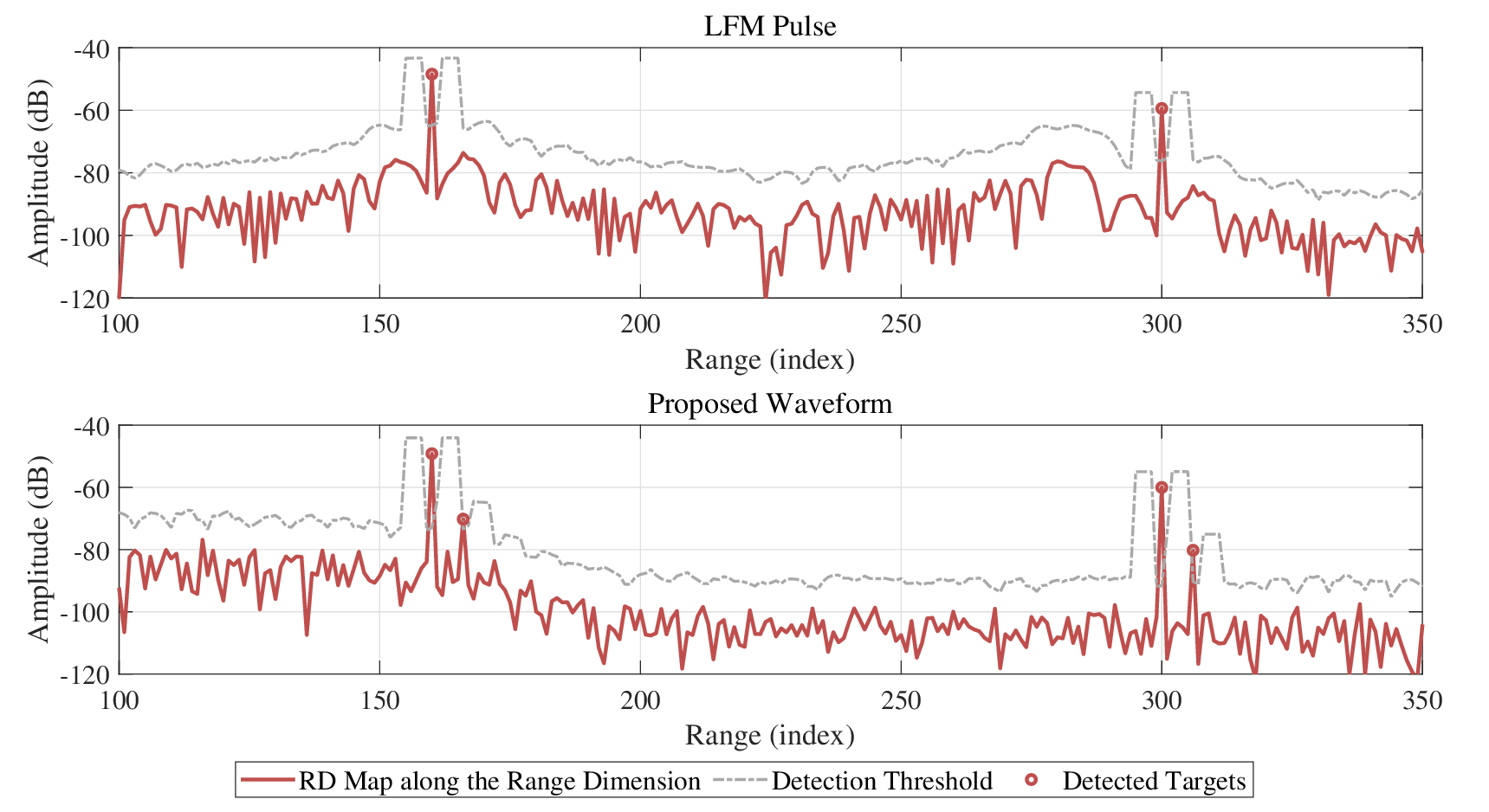}\label{fig: MLFM}}
    \vspace{-0em}
    \caption{Multi-target detection performance: (a) comparison between proposed waveform and OFDM waveform with $n_\tau=[50, 56]$ index and $\sigma=[0, -20]$ dBsm; (b) comparison between proposed waveform and LFM pulse with $n_\tau=[160, 166,300,306]$ index and $\sigma=[0, -20,0,-20]$ dBsm.
}\label{fig:multiTarget}
\end{figure*}

\begin{figure}[t]
    \centering
    \includegraphics[width=0.8\linewidth]{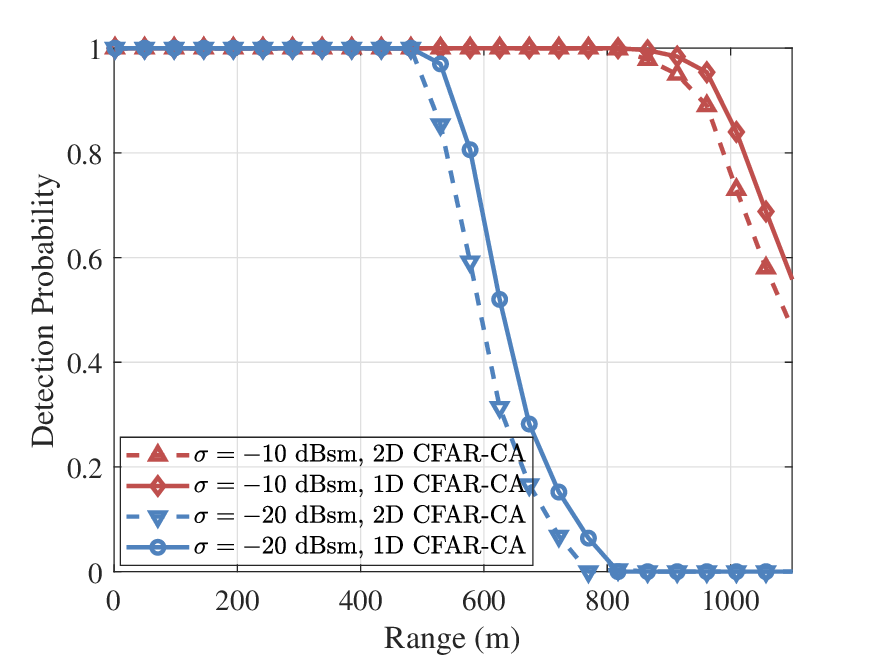}
    \caption{Performance comparison of different CFAR-CA detection methods.}
    \label{CFAR-CAcompare}
    \vspace{-1em}
\end{figure}
Since the low-power component in our proposed design introduces RSI for short-range targets, its impact is further evaluated through the minimum detectable RCS, as shown in Fig. \ref{minRCS}.
For our proposed design, the minimum detectable RCS with the optimal $w$ consistently lies below that of other filtering approaches employing different $w$ values, which demonstrates that a tunable $w$ for the mismatched filter can enhance the overall detection capability across all ranges. For $w=0$ and $w=1$, no minimum detectable RCS exists within the first 20 meters because the PSLR is insufficient for target detection. 
For the optimal $w$, when $R<20$ m, our design performs worse than the OFDM waveform. 
However, the minimum detectable RCS attains a sensitivity level below $-40$ dBsm under these SIC conditions, which is sufficient to meet the requirements of most ISAC sensing applications. 
For the optimal $w$, in the range $R>20$ m, our design outperforms the OFDM waveform. As the SIC level decreases, the minimum detectable RCS increases for both the OFDM waveform and our proposed design. However, the increase is significantly smaller for our waveform, demonstrating its superior resistance to self-interference. When the target range is within 192 m $<R<$ 300 m, the echo signal of the proposed waveform is affected by RSI, making its detection performance dependent on the SIC capability, whereas the LFM pulse remains unaffected by RSI. Consequently, our design exhibits inferior detection performance compared to the LFM pulse. For the target range $R$ exceeding 300 meters, where both the proposed waveform and conventional LFM pulses are unaffected by RSI, the proposed design achieves a slightly lower minimum detectable RCS compared to the conventional LFM pulse due to the additional energy provided by the low-power component. 
In short, our proposed design achieves complete elimination of the blind range inherent to conventional radar, while maintaining long-range detection performance independent of the SIC capability. 


Next, the multi-target detection performance is evaluated by using two closely spaced targets but having different RCSs of 0 dBsm and -20 dBsm. 
We employ the OFDM waveform as the performance benchmark for short-range detection, while utilizing the LFM pulse waveform for long-range detection. 
As shown in Fig. \ref{fig: MOFDM}, the multi-target detection is conducted for both the OFDM waveform and our proposed waveform for targets in the short range. 
For the OFDM waveform, the RSI masks the mainlobe of the weaker target, resulting in only a single detectable target.
In contrast, our proposed waveform effectively suppresses this interference, allowing simultaneous detection of both strong and weak targets. 
Another simulation is conducted for targets in long range with the LFM pulse and our proposed waveform, as shown in Fig. \ref{fig: MLFM}. The conventional LFM implementation exhibits performance degradation due to the sidelobe interference, where the strong target's sidelobes completely mask the weak target's mainlobe. Our waveform design overcomes this limitation through the combination of complementary sequences and the mismatched filtering technique. This innovative approach enables simultaneous detection of multiple targets with different signal strengths in one single processing iteration, achieving a substantial reduction in computational complexity compared to traditional successive target cancellation algorithms while maintaining detection accuracy. These results confirm that our solution provides more reliable multi-target detection capability.


The performance of our proposed hierarchical 1D CFAR-CA detector is also evaluated with respect to the maximum detectable range.   
As illustrated in Fig. \ref{CFAR-CAcompare}, for targets with identical RCS, the maximum detectable range achieved by the 2D CFAR-CA algorithm is shorter than that by our proposed hierarchical 1D CFAR-CA approach. This performance difference originates from the complementary sequence design that generates an RD map with perfect range-domain sidelobe cancellation. Consequently, the hierarchical 1D CFAR-CA detector operates without range sidelobe interference, while the 2D CFAR-CA detector remains susceptible to Doppler-domain sidelobe effects. As a result, a lower but reliable detection threshold is achieved by our proposed detector to support longer range detection.


\section{Conclusion} \label{section 7}


In this paper, a novel ISAC waveform was developed for 6G LAWN to address the limitations of conventional sensing approaches under imperfect full-duplex radios. A key innovation was the dual-power phase-coded pulse design, consisting of a high-power leading sequence and a low-power trailing sequence, which successfully bridged the gap between half-duplex long-range capabilities and full-duplex blind-range elimination. Furthermore, complementary and inverse-phase sequences were developed to ensure robust autocorrelation properties and sidelobe suppression for improving multi-target detection. Based on the proposed signal structure, a mismatched filter with the optimized parameter was designed to enhance the target detection performance, where a hierarchical 1D CFAR-CA detection mechanism was designed to exploit the benefits of our proposed waveform. Simulation results validated significant improvements in maximum detection range, minimal detectable RCS in the short range, and multi-target discrimination capability. The proposed waveform design established a viable solution framework for addressing the fundamental sensing challenge of imperfect full-duplex radios in emerging 6G ISAC systems.
\appendices

\section{Proof of Proposition \ref{prop:monotonicity}}
\label{app:proof_monotonicity}

To prove the strict monotonicity of $f(n_\tau,\sigma)$ with respect to $\sigma$, we define an intermediate variable $x = \frac{G_\mathrm{t}G_\mathrm{r}\lambda^2\sigma}{(4\pi)^3R^4}$. It is mathematically equivalent to prove $f'(x) > 0$ for $x > 0$.

For a given $n_\tau$, we introduce the following auxiliary positive constants: $A = P_\mathrm{h}(n_\tau - N_\mathrm{r})$, $D = P_\mathrm{l} L$, $F= |\beta|^2(L - n_\tau)P_\mathrm{l} + N_0 B L$. Furthermore, we define $m = \frac{K L A}{\gamma(n_\tau) F}$, $n = \frac{L|\beta|^2 P_\mathrm{l} + N_0 B L}{F}$, $c = \frac{K A^2}{\gamma(n_\tau)}$, $d = |\beta|^2(n_\tau - N_\mathrm{r})P_\mathrm{h} P_\mathrm{l} + N_0 B A$, and $e = P_\mathrm{l} F$.

Substituting the linear optimal weight $w^*(x) = m x + n$ into the SSINR formulation, $f(x)$ is constructed as a rational function:

{\small
\begin{equation}
    f(x) = \frac{K x \left(A + (mx+n)D\right)^2}{c x + d + e(mx+n)^2}.
\end{equation}    
}

Taking the first derivative yields $f'(x) = g(x)/D(x)^2$, where the numerator is a quartic polynomial
$g(x) = \sum_{i=0}^{4} C_i x^i$.
By expanding $f(x)$, one can verify that the coefficients $C_0, C_1, C_3$, and $C_4$ consist entirely of positive additive terms, making them strictly positive. The critical step is to determine the sign of the quadratic coefficient $C_2$, which can be formulated as:

{\small
\begin{equation} \label{eq:C2_medium}
    \frac{C_2}{K m^2} = \underbrace{3 D^2 d + 6 D^2 e n^2 + \frac{2 D^2 n c}{m} + 2 A D e n}_{>0} +\left( \frac{2 D A c}{m} - A^2 e \right).
\end{equation}
}

To evaluate the sign of the bracketed term, we substitute the parameters back into the expression. Noting that $\frac{c}{m} = \frac{A F}{L}$, $D = P_\mathrm{l} L$, and $e = P_\mathrm{l} F$, we obtain:
\begin{equation}
    \frac{2 D A c}{m} - A^2 e = 2 P_\mathrm{l} F A^2 - P_\mathrm{l} F A^2 = P_\mathrm{l} F A^2 > 0.
\end{equation}
Since this bracketed term evaluates to a strictly positive value, it follows from \eqref{eq:C2_medium} that $C_2 > 0$. Consequently, all coefficients of $g(x)$ are strictly positive. For any $\sigma > 0$ (i.e., $x > 0$), $g(x) > 0$ holds, which guarantees $f'(\sigma) > 0$. This completes the proof.

\bibliographystyle{IEEEtran}
\bibliography{reference}

\end{document}